\documentclass[pdflatex,sn-mathphys-num]{sn-jnl}

\usepackage{graphicx}
\usepackage{float}
\usepackage{multirow}
\usepackage{amsmath,amssymb,amsfonts}
\usepackage{amsthm}
\usepackage{mathrsfs}
\usepackage{mathtools}
\usepackage[title]{appendix}
\usepackage{xcolor}
\usepackage{subcaption}
\usepackage{textcomp}
\usepackage{manyfoot}
\usepackage{booktabs}
\usepackage{algorithm}
\usepackage{algorithmicx}
\usepackage{algpseudocode}
\usepackage{listings}

\lstdefinestyle{bashstyle}{
    language=bash,
    backgroundcolor=\color{gray!8},
    basicstyle=\ttfamily\footnotesize,
    keywordstyle=\color{blue!60!black}\bfseries,
    commentstyle=\color{gray!50!black}\itshape,
    stringstyle=\color{teal!70!black},
    frame=single,
    rulecolor=\color{black!30},
    breaklines=true,
    breakatwhitespace=false,
    showstringspaces=false,
    tabsize=2,
    xleftmargin=1em,
}

\definecolor{pythonBackground}{RGB}{248,249,251}
\definecolor{pythonFrame}{RGB}{205,210,218}
\definecolor{pythonKeyword}{RGB}{33,85,138}
\definecolor{pythonComment}{RGB}{86,116,86}
\definecolor{pythonString}{RGB}{153,76,0}
\lstdefinestyle{pythonstyle}{
    language=Python,
    backgroundcolor=\color{pythonBackground},
    basicstyle=\ttfamily\scriptsize,
    keywordstyle=\color{pythonKeyword}\bfseries,
    commentstyle=\color{pythonComment}\itshape,
    stringstyle=\color{pythonString},
    numbers=left,
    numberstyle=\tiny\color{gray!70!black},
    numbersep=8pt,
    frame=single,
    rulecolor=\color{pythonFrame},
    framerule=0.4pt,
    framesep=5pt,
    breaklines=true,
    breakatwhitespace=false,
    showstringspaces=false,
    keepspaces=true,
    columns=fullflexible,
    tabsize=4,
    upquote=true,
    xleftmargin=2.4em,
    framexleftmargin=2em,
    captionpos=t,
    aboveskip=1em,
    belowskip=1em,
}

\numberwithin{equation}{section}

\usepackage{geometry}
\usepackage[T1]{fontenc}
\usepackage{lmodern}

\usepackage{tikz}
\usepackage{pgfplots}
\pgfplotsset{compat=1.18}

\usetikzlibrary{patterns}
\usetikzlibrary{arrows,arrows.meta}

\usepackage[nameinlink,noabbrev]{cleveref}
\hypersetup{hidelinks}
\theoremstyle{thmstyleone}
\newtheorem{theorem}{Theorem}

\theoremstyle{thmstyletwo}
\newtheorem{example}{Example}

\theoremstyle{thmstylethree}

\crefname{theorem}{theorem}{theorems}
\Crefname{theorem}{Theorem}{Theorems}

\crefname{proposition}{proposition}{propositions}
\Crefname{proposition}{Proposition}{Propositions}

\crefname{lemma}{lemma}{lemmas}
\Crefname{lemma}{Lemma}{Lemmas}

\crefname{corollary}{corollary}{corollaries}
\Crefname{corollary}{Corollary}{Corollaries}

\crefname{definition}{definition}{definitions}
\Crefname{definition}{Definition}{Definitions}

\crefname{example}{example}{examples}
\Crefname{example}{Example}{Examples}

\crefname{remark}{remark}{remarks}
\Crefname{remark}{Remark}{Remarks}

\newcommand{\mentry}[1]{\mathmakebox[1.25em][c]{#1}}
\definecolor{rhoLegendMarker}{HTML}{708087}
\newcommand{\rhoLegendMark}[2][]{%
	\tikz[baseline=-0.55ex]{%
		\draw plot[
			only marks,
			mark=#2,
			mark size=2.1pt,
			mark options={fill=rhoLegendMarker,draw=white,line width=0.25pt,#1}
		] coordinates {(0,0)};%
	}%
}
\newcommand{\rhoLegendPlus}{%
	\tikz[baseline=-0.55ex,x=1pt,y=1pt]{%
		\fill[rhoLegendMarker] (-2.5,-0.8) rectangle (2.5,0.8);
		\fill[rhoLegendMarker] (-0.8,-2.5) rectangle (0.8,2.5);
	}%
}
\newcommand{\sharedRhoLegend}{%
	\begingroup\footnotesize
		$\rho$\hspace{1em}%
		\rhoLegendMark{*}\,$2^{-1}$\hspace{1.3em}%
		\rhoLegendMark{square*}\,$2^{-2}$\hspace{1.3em}%
		\rhoLegendMark{diamond*}\,$2^{-3}$\hspace{1.3em}%
		\rhoLegendMark{triangle*}\,$2^{-4}$\hspace{1.3em}%
		\rhoLegendMark[rotate=180]{triangle*}\,$2^{-5}$\hspace{1.3em}%
		\rhoLegendPlus\,$2^{-6}$%
	\endgroup
}

\begin{document}

\title[A 52-Addition, Rank-23 Scheme for Exact $3 \times 3$ Matrix Multiplication]{A 52-Addition, Rank-23 Scheme for Exact $3 \times 3$ Matrix Multiplication}

\author*[1]{\fnm{Hugo Møller} \sur{Nielsen}}\email{hugomn2002@gmail.com}

\affil[1]{\orgdiv{Department of Applied Mathematics and Computer Science}, \orgname{Technical University of Denmark (DTU)},
	\orgaddress{\city{Kgs. Lyngby}, \postcode{2800}, \country{Denmark}}}

\abstract{
	We present a rank-$23$ algorithm for general $3 \times 3$ matrix multiplication using $52$
	additions or subtractions, together with a Karstadt--Schwartz-style basis change
	\cite{SchwartzVakninPebble,KarstadtSchwartz2020MatMulLittleFaster}.
	This gives, to the best of our knowledge, a new best-known candidate for the
	addition-minimization problem for rank-$23$ general $3 \times 3$ matrix multiplication.
}

\keywords{matrix multiplication, Laderman scheme, addition minimization, additive complexity}

\maketitle

\section{Introduction}
In $1969$, Strassen discovered the first sub-cubic matrix multiplication algorithm,
and variants of Strassen's algorithm are still used in high-performance matrix multiplication
implementations \cite{Strassen1969GaussianElimination}.
Since then, several works have reduced the constant factors in the running time of
Strassen's algorithm.
A recent improvement to the leading constant factor of Strassen's algorithm was obtained by
Karstadt and Schwartz in \cite{KarstadtSchwartz2020MatMulLittleFaster}. They introduced a
basis-change method that reduced the number of additions required at each
recursive step in Strassen's algorithm from $15$ to $12$.
Moreover, they proved that, within their framework, $12$ additions is optimal.

A natural next step beyond Strassen's rank-$7$ algorithm for $2 \times 2$ matrices is to study
Strassen-like recursive algorithms for $3 \times 3$ matrices.
In $1976$, Laderman introduced a rank-$23$ algorithm for multiplying $3 \times 3$
matrices \cite{Laderman1976Noncommutative3x3}. Despite extensive subsequent work,
no construction of lower rank is known for general $3 \times 3$ matrix multiplication
\cite{HeuleKauersSeidl2021NewWays3x3,AlphaTensor2022}.
As such, recent work has focused on reducing the addition count, and hence the leading
constant, for rank-$23$ matrix multiplication algorithms, see \Cref{fig:3x3_addition_progression}.

\begin{figure}[H]
	\begin{center}
		\begin{tikzpicture}[scale=0.9]
			\begin{axis}[
					width=14cm,
					height=5.4cm,
					xlabel={Publication Date},
					ylabel={Addition Count},
					axis lines=left,
					axis line style={black, thin},
					tick style={black, thin},
					tick align=outside,
					tick label style={font=\small},
					label style={font=\small},
					xmin=-0.4,
					xmax=13.4,
					ymin=50,
					ymax=102,
					ytick={50,55,60,65,70,75,80,85,90,95,100},
					minor y tick num=1,
					grid=none,
					clip=false,
					xtick={0,2,3.3,4.5,5.9,7.2,8.8,9.8,11.2,12.7},
					xticklabels={
							1976,
							2013,
							2020,
							2020,
							{\shortstack{Dec.~23\\2024}},
							{\shortstack{Aug.~19\\2025}},
							{\shortstack{Dec.~18\\2025}},
							{\shortstack{Dec.~26\\2025}},
							{\shortstack{Apr.~30\\2026}},
							{\shortstack{Jun.~26\\2026}}
						},
					xticklabel style={
							font=\scriptsize,
							rotate=45,
							anchor=north east,
							align=center,
							xshift=2pt,
							yshift=-2pt
						},
					xlabel style={yshift=-0.1em},
				]

				\tikzset{
					pointlabel/.style={
							font=\tiny,
							rotate=45,
							anchor=south west,
							inner sep=2pt,
							xshift=3pt,
							yshift=-2pt
						}
				}

				\addplot[
					black,
					thin,
					const plot,
					mark=none,
				] coordinates {
						(0,    98)
						(2,    84)
						(3.3,  75)
						(4.5,  61)
						(7.2,  60)
						(8.8,  59)
						(9.8,  58)
						(11.2, 56)
						(12.7, 52)
					};

				\addplot[
					only marks,
					mark=x,
					mark size=2pt,
					black,
					thin,
				] coordinates {
						(0,    98)
						(2,    84)
						(3.3,  75)
						(4.5,  61)
						(7.2,  60)
						(8.8,  59)
						(9.8,  58)
						(11.2, 56)
						(12.7, 52)
					};

				\addplot[
					only marks,
					mark=asterisk,
					mark size=2.5pt,
					black,
					thin,
				] coordinates {
						(5.9, 62)
					};

				\node[pointlabel] at (axis cs:0,98)      {Laderman \protect\cite{Laderman1976Noncommutative3x3}};
				\node[pointlabel] at (axis cs:2,84)      {Smirnov \protect\cite{Smirnov2013BilinearComplexity}};
				\node[pointlabel] at (axis cs:3.3,75)    {K--S \protect\cite{KarstadtSchwartz2020MatMulLittleFaster}};
				\node[pointlabel] at (axis cs:4.5,61)    {Beniamini \protect\cite{Beniamini2020SparsifyingOperators}};
				\node[pointlabel] at (axis cs:5.9,62)    {M--W \protect\cite{MartenssonWagner_NumberBeast}};
				\node[pointlabel] at (axis cs:7.2,60)    {Stapleton \protect\cite{Stapleton2025SixtyAdditionRank23}};
				\node[pointlabel] at (axis cs:8.8,59)    {M--SW--S \protect\cite{MartenssonWagnerStapleton2025Rank23Arithmetic59}};
				\node[pointlabel] at (axis cs:9.8,58)    {Perminov \protect\cite{Perminov2025FiftyEightAdditionRank23}};
				\node[pointlabel] at (axis cs:11.2,56)   {Sun \protect\cite{sun2026exact56additionrank23}};
				\node[pointlabel] at (axis cs:12.7,52)   {Our Work};

				\draw[thin]
				(axis cs:5.15,50.8) -- (axis cs:5.28,50.1)
				(axis cs:5.35,50.8) -- (axis cs:5.48,50.1);

			\end{axis}
		\end{tikzpicture}
	\end{center}
	\caption{Progress on reducing the number of additions for each recursive step of
		a rank-$23$ Strassen-type algorithm for general $3 \times 3$ matrix multiplication.}
	\label{fig:3x3_addition_progression}
\end{figure}
In this work, we adapt and improve the greedy-potential heuristic proposed by
Mårtensson and Wagner \cite{MartenssonWagner_NumberBeast}.
We first work over $\mathbb{F}_2$ and apply the heuristic to
the rank-$23$ dataset for general $3 \times 3$ matrix multiplication generated
by Heule, Kauers, and Seidel \cite{HeuleKauersSeidl2021NewWays3x3}.
To enlarge the search space, we apply Bernoulli-random basis-change matrices to each data point.
Finally, we lift our solutions to general fields $\mathbb{F}$ by replacing certain
additions over $\mathbb{F}_2$ with subtractions over $\mathbb{F}$.

\section{Background}
Throughout the paper, an \emph{addition} will refer to an operation of the form
\begin{align*}
	f : R^2 \to R, \qquad f(x,y) = \lambda x + \mu y,
\end{align*}
where $R$ is a ring and $\lambda, \mu \in \{-1, 1\}$. This is the standard
convention when counting additions in Strassen-type matrix multiplication
schemes \cite{MartenssonWagner_NumberBeast,Stapleton2025SixtyAdditionRank23}.

\subsection{Strassen-type schemes for square matrices}
Let $U,V,W \in \{-1,0,1\}^{r \times k^2}$. We say that $(U,V,W)$ is a rank-$r$ multiplication
scheme for $k \times k$ matrices if, for all $A,B \in R^{k \times k}$ over an arbitrary
ring $R$,
\begin{align} \label{eq:main_vectorized_strassen_type_eq}
	\vec{AB} = W^\intercal \left(\left(U\vec{A}\right) \odot \left(V\vec{B}\right)\right),
\end{align}
where $\odot$ is the Hadamard product and $\vec{X}$ denotes the column-wise
vectorization of the matrix $X \in R^{k \times k}$.
The rank $r$ is the number of multiplications over $R$ used by the scheme.
Applying such a scheme recursively to $k \times k$ block matrices yields a
Strassen-type algorithm with arithmetic complexity $O(n^{\log_{k}{r}})$,
see \Cref{thm:time_complexity_generalized_strassen}.

\begin{theorem} \label{thm:time_complexity_generalized_strassen}
	Let $k,n \in \mathbb{N}$ and $r > k^2$. Given a scheme of the form \cref{eq:main_vectorized_strassen_type_eq}
	with $W, U, V \in \{ -1, 0, 1 \}^{r \times k^2}$ that requires $\alpha$ additions to evaluate, we can multiply matrices
	$A,B \in \mathbb{F}^{n \times n}$ using $n^{\log_k r}$ multiplications and $\frac{\alpha}{r - k^2}\left(n^{\log_k{r}} - n^2\right)$
	additions when $n$ is a power of $k$. We use $O(n^{\log_{k}{r}})$ arithmetic operations for multiplying two $n \times n$ matrices
	for general $n \in \mathbb{N}$.
\end{theorem}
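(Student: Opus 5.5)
Write $n = k^m$ and view $A,B \in \mathbb{F}^{n\times n}$ as $k\times k$ block matrices with blocks of size $n/k$. Since \cref{eq:main_vectorized_strassen_type_eq} holds over an arbitrary ring, we apply it with $R = \mathbb{F}^{(n/k)\times(n/k)}$. This gives $C = AB$ using $r$ block products, each computed recursively, plus the $\alpha$ block additions of the scheme. Each block addition costs $(n/k)^2$ scalar additions, because the coefficients lie in $\{-1,0,1\}$ and $\lambda x+\mu y$ with $\lambda,\mu\in\{\pm1\}$ is a single entrywise addition or subtraction. The proof then consists of solving the two recurrences this produces.

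\textbf{Multiplications.} With $M(n)$ the number of multiplications, we have $M(k^m) = r\,M(k^{m-1})$ and $M(1)=1$. Hence $M(n) = r^m = r^{\log_k n} = n^{\log_k r}$.

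\textbf{Additions.} With $S(n)$ the number of additions, we have
\begin{align*}
S(k^m) = r\,S(k^{m-1}) + \alpha\, k^{2(m-1)}, \qquad S(1)=0.
\end{align*}
Unrolling gives
\begin{align*}
S(k^m) = \alpha\sum_{j=0}^{m-1} r^{j} k^{2(m-1-j)} = \alpha\,\frac{r^m - k^{2m}}{r-k^2},
\end{align*}
a geometric sum with ratio $r/k^2 \neq 1$, which is where the hypothesis $r>k^2$ is used. Since $r^m = n^{\log_k r}$ and $k^{2m} = n^2$, this is exactly the stated count. I would verify the closed form by induction on $m$: assuming it for $m-1$,
\begin{align*}
r\,\alpha\,\frac{r^{m-1}-k^{2m-2}}{r-k^2} + \alpha k^{2m-2} = \alpha\,\frac{r^m - k^{2m}}{r-k^2}.
\end{align*}

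\textbf{General $n$.} Let $m = \lceil \log_k n\rceil$ and pad $A$ and $B$ with zeros to size $N = k^m < kn$. The product of the padded matrices contains $AB$ as its top-left $n\times n$ block, so we multiply the padded matrices and read off that block. Padding costs no arithmetic. The total cost is then
\begin{align*}
N^{\log_k r}\left(1+\frac{\alpha}{r-k^2}\right) \le (kn)^{\log_k r}\left(1+\frac{\alpha}{r-k^2}\right) = r\left(1+\frac{\alpha}{r-k^2}\right) n^{\log_k r},
\end{align*}
which is $O(n^{\log_k r})$ because $k$, $r$ and $\alpha$ are constants.

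\textbf{Main subtlety.} The only real point is justifying that the scheme applies with block entries. That matrix blocks form a noncommutative ring $R$ is precisely why \cref{eq:main_vectorized_strassen_type_eq} is assumed for arbitrary rings and not just commutative ones. It also needs to be said that the count $\alpha$ refers to additions in $R$, so each one is an $(n/k)\times(n/k)$ entrywise operation. The rest is routine summation.
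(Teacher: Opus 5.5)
Your proof is correct and follows essentially the same route as the paper. You apply the scheme to $k\times k$ block matrices over the ring $\mathbb{F}^{(n/k)\times(n/k)}$, get the closed-form addition count by induction (phrased as unrolling $S(k^m)=rS(k^{m-1})+\alpha k^{2(m-1)}$ from base case $n=1$ instead of the paper's $n=k$), and handle general $n$ by zero-padding to $k^{\lceil\log_k n\rceil}<kn$, where you also give the explicit constant $r(1+\frac{\alpha}{r-k^2})$.
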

\begin{proof}
	Since $\mathbb{F}^{m \times m}$ is a ring for all $m \in \mathbb{N}$, our scheme applies to
	$k \times k$ block matrices with elements in $\mathbb{F}^{m \times m}$.

	Let $n = k^\ell$. We proceed by induction on $\ell$. For $\ell = 1$, the given $\alpha$-addition,
	$r$-multiplication scheme of the form in \cref{eq:main_vectorized_strassen_type_eq} establishes the base case.
	Suppose now that we require $n^{\log_k r} = r^\ell$ multiplications, and
	$\frac{\alpha}{r - k^2}\left(n^{\log_k{r}} - n^2\right) = \frac{\alpha}{r - k^2}\left(r^\ell - k^{2\ell}\right)$
	additions to compute the matrix product for some $\ell \in \mathbb{N}$. Given any two matrices
	$A,B \in \mathbb{F}^{k^{\ell + 1} \times k^{\ell + 1}}$, we can view them as $k \times k$ block matrices
	with entries in $\mathbb{F}^{k^\ell \times k^\ell}$.
	Applying \cref{eq:main_vectorized_strassen_type_eq} at the block level uses $r$ block multiplications,
	and $\alpha$ block additions. Each block multiplication uses $r^\ell$ multiplications and
	$\frac{\alpha}{r - k^2}\left(r^\ell - k^{2\ell}\right)$ additions, by the induction hypothesis,
	hence the block multiplications contribute a total of $r \cdot r^{\ell} = r^{\ell + 1}$ multiplications, and
	$\frac{r\alpha}{r - k^2}\left(r^\ell - k^{2\ell}\right)$ additions. Each block addition contributes
	$(k^\ell)^2 = k^{2\ell}$ additions. The total number of multiplications required to compute
	the matrix product $AB$ for matrices $A,B \in \mathbb{F}^{k^{\ell + 1} \times k^{\ell + 1}}$ is thus:
	\begin{align}
		r^{\ell + 1} = (kn)^{\log_k{r}},
	\end{align}
	and the total number of additions required is
	\begin{align}
		\frac{r\alpha}{r - k^2}\left(r^\ell - k^{2\ell}\right) + \alpha k^{2\ell}
		 & = \frac{r \alpha}{r - k^2}\left(r^\ell - k^{2\ell} + \frac{r - k^2}{r}k^{2\ell}\right) \\
		 & = \frac{\alpha}{r - k^2}\left(r^{\ell+1} - k^{2(\ell+1)}\right)
		= \frac{\alpha}{r - k^2} \left((kn)^{\log_k{r}} - (kn)^2\right),
	\end{align}
	which by the principle of mathematical induction completes the proof.

	Suppose now that $n$ is not a power of $k$. Pad $A,B$ with zeros to the size
	$m = k^{\lceil \log_k{n} \rceil}$.
	Denote the padded matrices $\tilde{A}$ and $\tilde{B}$ respectively, so that:
	\begin{align*}
		 & \tilde{A} = \begin{bmatrix}
			               A                    & 0_{n \times (m - n)}     \\
			               0_{(m - n) \times n} & 0_{(m-n) \times (m - n)}
		               \end{bmatrix},
		 & \tilde{B} = \begin{bmatrix}
			               B                    & 0_{n \times (m - n)}     \\
			               0_{(m - n) \times n} & 0_{(m-n) \times (m - n)}
		               \end{bmatrix},
	\end{align*}
	where $0_{a \times b}$ represents the $a \times b$ zero matrix.
	Since $m < kn$ and $m$ is a power of $k$, we can compute $\tilde{A}\tilde{B}$ in
	$O\left(m^{\log_k{r}}\right) = O\left((nk)^{\log_k{r}}\right) = O\left(n^{\log_k{r}}\right)$
	arithmetic operations. Finally, we can extract $AB$ from $\tilde{A}\tilde{B}$ as the top-left
	$n \times n$ submatrix of $\tilde{A}\tilde{B}$. We thus can compute the matrix product $AB$ using
	a total of $O(n^{\log_{k}{r}})$ arithmetic operations.
\end{proof}

For fixed coefficient matrices $U,V,W$ and $M = (U\overrightarrow{A}) \odot (V\overrightarrow{B})$, minimizing
the total number of additions required to evaluate each of the three linear maps
\begin{align*}
	U\overrightarrow{A}, \qquad V\overrightarrow{B}, \qquad W^{\top}M,
\end{align*}
directly reduces the leading constant in the asymptotic arithmetic operation count of the
corresponding Strassen-type recursive algorithm, see \Cref{thm:time_complexity_generalized_strassen}.
One way to reduce the number of additions required to compute each of these linear maps
is to search for new matrix triples $U,V,W \in \{-1, 0, 1\}^{r \times k^2}$ that require
fewer additions to compute. Another method is to efficiently reuse additions that have
already been computed, see \Cref{example:UVW_min}.

\begin{example} \label{example:UVW_min}
	Let $U,V,W \in \{-1,0,1\}^{7 \times 4}$ be the Winograd
	coefficient matrices \cite{fischer1974further,boyer2009memory} given by
	\begin{align}
		U & =
		\begin{bmatrix}
			\mentry{1}  & \mentry{0}  & \mentry{0} & \mentry{0}  \\
			\mentry{-1} & \mentry{1}  & \mentry{0} & \mentry{0}  \\
			\mentry{0}  & \mentry{1}  & \mentry{0} & \mentry{1}  \\
			\mentry{-1} & \mentry{1}  & \mentry{0} & \mentry{1}  \\
			\mentry{0}  & \mentry{0}  & \mentry{1} & \mentry{0}  \\
			\mentry{1}  & \mentry{-1} & \mentry{1} & \mentry{-1} \\
			\mentry{0}  & \mentry{0}  & \mentry{0} & \mentry{1}
		\end{bmatrix},
		\qquad
		V =
		\begin{bmatrix}
			\mentry{1}  & \mentry{0} & \mentry{0}  & \mentry{0}  \\
			\mentry{0}  & \mentry{0} & \mentry{1}  & \mentry{-1} \\
			\mentry{-1} & \mentry{0} & \mentry{1}  & \mentry{0}  \\
			\mentry{1}  & \mentry{0} & \mentry{-1} & \mentry{1}  \\
			\mentry{0}  & \mentry{1} & \mentry{0}  & \mentry{0}  \\
			\mentry{0}  & \mentry{0} & \mentry{0}  & \mentry{1}  \\
			\mentry{-1} & \mentry{1} & \mentry{1}  & \mentry{-1}
		\end{bmatrix},
		\qquad
		W =
		\begin{bmatrix}
			\mentry{1} & \mentry{1} & \mentry{1} & \mentry{1} \\
			\mentry{0} & \mentry{1} & \mentry{0} & \mentry{1} \\
			\mentry{0} & \mentry{0} & \mentry{1} & \mentry{1} \\
			\mentry{0} & \mentry{1} & \mentry{1} & \mentry{1} \\
			\mentry{1} & \mentry{0} & \mentry{0} & \mentry{0} \\
			\mentry{0} & \mentry{0} & \mentry{1} & \mentry{0} \\
			\mentry{0} & \mentry{1} & \mentry{0} & \mentry{0}
		\end{bmatrix}.
	\end{align}
	Expanding the right-hand side of \cref{eq:main_vectorized_strassen_type_eq}
	with this choice of $U$, $V$, and $W$ shows that it computes the matrix product
	$AB$ for any $A,B \in R^{2 \times 2}$.

	The linear transformation induced by $U$ on
	$\vec{A} = \begin{bmatrix} A_{11} & A_{21} & A_{12} & A_{22} \end{bmatrix}^{\intercal}$
	and the linear transformation induced by $V$ on
	$\vec{B} = \begin{bmatrix} B_{11} & B_{21} & B_{12} & B_{22} \end{bmatrix}^{\intercal}$
	can each be computed using $4$ additions as follows:
	\begin{align*}
		\begin{alignedat}{4}
			u_{1} & = A_{11}, & \qquad u_{2} & = A_{21} - A_{11}, & \qquad u_{3} & = A_{21} + A_{22}, & \qquad u_{4} & = u_{2} + A_{22}, \\
			u_{5} & = A_{12}, & \qquad u_{6} & = A_{12} - u_{4},  & \qquad u_{7} & = A_{22},          &              &                   \\
			v_{1} & = B_{11}, & \qquad v_{2} & = B_{12} - B_{22}, & \qquad v_{3} & = B_{12} - B_{11}, & \qquad v_{4} & = B_{11} - v_{2}, \\
			v_{5} & = B_{21}, & \qquad v_{6} & = B_{22},          & \qquad v_{7} & = B_{21} - v_{4},  &              &
		\end{alignedat}
	\end{align*}
	where $u_i = U_{i,:}\vec{A}$, and $v_i = V_{i,:}\vec{B}$ for each $i = 1,\dots,7$.

	Now define $M_i = u_i v_i$, for $i = 1,\dots,7$. The output transformation induced
	by $W^\intercal$ on $M \in R^{7}$ can be computed in $7$ additions by introducing
	the temporary variables $t_{1}$, $t_{2}$, and $t_{3}$ as follows:
	\begin{align*}
		 & t_{1} = M_{1} + M_{4},
		 &
		 & w_{1} = M_{1} + M_{5}, \\
		 & t_{2} = t_{1} + M_{2},
		 &
		 & w_{2} = t_{2} + M_{7}, \\
		 & t_{3} = t_{1} + M_{3},
		 &
		 & w_{3} = t_{3} + M_{6}, \\
		 & w_{4} = t_{2} + M_{3}.
	\end{align*}
	In total, this Winograd scheme computes $AB$ using $7$ multiplications and
	$4 + 4 + 7 = 15$ additions. \Cref{thm:time_complexity_generalized_strassen}
	thus implies a leading constant in the asymptotic operation count of
	$1 + \frac{15}{7 - 2^2} = 6$ for Winograd's scheme, improving upon Strassen's scheme
	which requires $18$ additions and hence has a leading constant of $1 + \frac{18}{7 - 2^2} = 7$.

	One way to represent the computations of the linear transformations induced by
	$U$, $V$, and $W$ is as edge-weighted directed acyclic graphs (DAGs),
	see \Cref{fig:DAG_rep_minadd_example}.
	\begin{figure}[H]
		\centering
		\resizebox{\textwidth}{!}{%
			\begin{tikzpicture}[
				vertex/.style={
						shape=circle,
						draw,
						minimum size=8mm,
						text width=8mm,
						align=center,
						inner sep=0pt,
						font=\scriptsize
					},
				edge/.style={-{Latex[length=2mm]}},
				lab/.style={pos=0.58, fill=white, inner sep=1pt, font=\scriptsize}
				]

				\begin{scope}[shift={(0,0)}]
					\node at (0,0.9) {$U\vec{A}$};

					\node[vertex] (Ua11) at (-2.7,0) {$A_{11}$};
					\node[vertex] (Ua21) at (-0.9,0) {$A_{21}$};
					\node[vertex] (Ua12) at ( 2.7,0) {$A_{12}$};
					\node[vertex] (Ua22) at ( 0.9,0) {$A_{22}$};

					\node[vertex] (Uu2) at (-1.8,-1.4) {$u_2$};
					\node[vertex] (Uu3) at (-0.2,-1.4) {$u_3$};

					\node[vertex] (Uu4) at (0.9,-2.8) {$u_4$};
					\node[vertex] (Uu6) at ( 2.7,-1.4) {$u_6$};

					\draw[edge] (Ua21) -- node[lab] {$+$} (Uu2);
					\draw[edge] (Ua11) -- node[lab] {$-$} (Uu2);

					\draw[edge] (Ua21) -- node[lab] {$+$} (Uu3);
					\draw[edge] (Ua22) -- node[lab] {$+$} (Uu3);

					\draw[edge] (Uu2) -- node[lab] {$+$} (Uu4);
					\draw[edge] (Ua22) -- node[lab] {$+$} (Uu4);

					\draw[edge] (Ua12) -- node[lab] {$+$} (Uu6);
					\draw[edge] (Uu4) -- node[lab] {$-$} (Uu6);
				\end{scope}

				\begin{scope}[shift={(7.8,0)}]
					\node at (0,0.9) {$V\vec{B}$};

					\node[vertex] (Vb11) at (-2.7,0) {$B_{11}$};
					\node[vertex] (Vb21) at ( 2.7,0) {$B_{21}$};
					\node[vertex] (Vb12) at (-0.9,0) {$B_{12}$};
					\node[vertex] (Vb22) at ( 0.9,0) {$B_{22}$};

					\node[vertex] (Vv3) at (-1.4,-1.4) {$v_3$};
					\node[vertex] (Vv2) at ( 0.4,-1.4) {$v_2$};

					\node[vertex] (Vv4) at (-2.7,-2.8) {$v_4$};
					\node[vertex] (Vv7) at ( 0.0,-4.2) {$v_7$};

					\draw[edge] (Vb12) -- node[lab] {$+$} (Vv2);
					\draw[edge] (Vb22) -- node[lab] {$-$} (Vv2);

					\draw[edge] (Vb12) -- node[lab] {$+$} (Vv3);
					\draw[edge] (Vb11) -- node[lab] {$-$} (Vv3);

					\draw[edge] (Vb11) -- node[lab] {$+$} (Vv4);
					\draw[edge] (Vv2) -- node[lab] {$-$} (Vv4);

					\draw[edge] (Vb21) -- node[lab] {$+$} (Vv7);
					\draw[edge] (Vv4) -- node[lab] {$-$} (Vv7);
				\end{scope}

				\begin{scope}[shift={(15.8,0)}]
					\node at (0,0.9) {$W^\intercal M$};
					\tikzset{
						halo/.style={
								preaction={draw=white, line width=4pt}
							}
					}

					\node[vertex] (Wm7) at (-4.2,0) {$M_7$};
					\node[vertex] (Wm2) at (-2.8,0) {$M_2$};
					\node[vertex] (Wm4) at (-1.4,0) {$M_4$};
					\node[vertex] (Wm1) at ( 0.0,0) {$M_1$};
					\node[vertex] (Wm5) at ( 1.4,0) {$M_5$};
					\node[vertex] (Wm3) at ( 2.8,0) {$M_3$};
					\node[vertex] (Wm6) at ( 4.2,0) {$M_6$};

					\node[vertex] (Wt1) at (-0.7,-1.4) {$t_1$};
					\node[vertex] (Ww1) at ( 0.7,-1.4) {$w_1$};

					\node[vertex] (Wt2) at (-2.1,-2.8) {$t_2$};
					\node[vertex] (Wt3) at ( 1.4,-2.8) {$t_3$};

					\node[vertex] (Ww2) at (-3.15,-4.2) {$w_2$};
					\node[vertex] (Ww4) at (-0.75,-4.2) {$w_4$};
					\node[vertex] (Ww3) at ( 2.8,-4.2) {$w_3$};

					\draw[edge] (Wm4) -- node[lab] {$+$} (Wt1);
					\draw[edge] (Wm1) -- node[lab] {$+$} (Wt1);

					\draw[edge] (Wm1) -- node[lab] {$+$} (Ww1);
					\draw[edge] (Wm5) -- node[lab] {$+$} (Ww1);

					\draw[edge] (Wt1) -- node[lab] {$+$} (Wt2);
					\draw[edge] (Wm2) -- node[lab] {$+$} (Wt2);

					\draw[edge] (Wt1) -- node[lab] {$+$} (Wt3);
					\draw[edge] (Wm3) -- node[lab] {$+$} (Wt3);

					\draw[edge] (Wt2) -- node[lab] {$+$} (Ww2);
					\draw[edge] (Wm7) -- node[lab] {$+$} (Ww2);

					\draw[edge] (Wt3) -- node[lab] {$+$} (Ww3);
					\draw[edge] (Wm6) -- node[lab] {$+$} (Ww3);

					\draw[edge] (Wt2) -- node[lab] {$+$} (Ww4);

					\draw[edge, halo]
					(Wm3.south) .. controls +(0.25,-2.45) and +(2.45,0.35) ..
					node[lab, pos=0.72] {$+$} (Ww4.east);
				\end{scope}

			\end{tikzpicture}%
		}
		\caption{DAG representations of the minimal-addition computations of $U\vec A$, $V\vec B$, and $W^\intercal M$
			from \protect\Cref{example:UVW_min}.
			Every non-input node represents the signed sum induced by its incoming edges.
			Edges marked with "$+$" indicate addition of the source node, while edges marked
			with "$-$" indicate addition of the negated source node.
			Only nontrivial computations are shown, i.e. direct copies such
			as $u_{1} = A_{11}$ are omitted.}
		\label{fig:DAG_rep_minadd_example}
	\end{figure}
\end{example}
In this work, we will not explicitly find new triples of matrices $(U,V,W)$ satisfying
\cref{eq:main_vectorized_strassen_type_eq}. Our main focus is the efficient reuse of
additions, see \Cref{example:UVW_min}.

\subsection{Basis Change}
The condition implied by \cref{eq:main_vectorized_strassen_type_eq} on matrices $U,V,W \in \{-1,0,1\}^{r \times k^2}$
is restrictive: the coefficient matrices must encode a correct matrix product over all rings $R$.
Karstadt and Schwartz introduced a basis-change method that enlarges the set of useful representations
by allowing invertible changes of basis on the inputs and output of \cref{eq:main_vectorized_strassen_type_eq}.
For our purposes, let
\begin{align*}
	B_U, B_V, B_W \in \{-1,0,1\}^{k^2 \times k^2}
\end{align*}
be invertible over the fields under consideration, and define for any valid matrix multiplication
scheme $(U,V,W)$ the basis-changed matrices:
\begin{align}
	U^\prime = U B_U^{-1}, \qquad
	V^\prime = V B_V^{-1}, \qquad
	W^\prime = W B_W^{-\intercal}. \label{eq:transformed_UVW}
\end{align}
Then \cref{eq:main_vectorized_strassen_type_eq} can be rewritten as:
\begin{align}
	\vec{AB} & = W^\intercal \left(\left(U \vec{A}\right) \odot \left(V \vec{B}\right)\right)
	= B_{W} B_{W}^{-1} W^\intercal \left(\left(U B_{U}^{-1} B_{U} \vec{A}\right)
	\odot \left(V B_{V}^{-1} B_{V} \vec{B}\right)\right) \label{eq:apply_basis_change_1}      \\
	         & = B_{W} \left(W^\prime\right)^\intercal
	\left(\left(U^\prime \left( B_{U} \vec{A} \right)\right) \odot
	\left(V^\prime \left( B_{V} \vec{B} \right)\right)\right). \label{eq:apply_basis_change_2}
\end{align}
The transformed matrices $U^\prime$, $V^\prime$, and $W^\prime$ may admit shorter addition
schemes than the original coefficient matrices $U$, $V$, and $W$. Moreover, the preprocessing
maps induced by $B_U$ and $B_V$ and the postprocessing map induced by $B_W$ can be pushed
through the recursion so that their total contribution is $O(n^2\log{n})$ arithmetic operations,
which does not change the leading $O(n^{\log_{k}{r}})$ term when $r > k^2$,
see \Cref{thm:basis_change}.
\begin{theorem} \label{thm:basis_change}
	Let $k \geq 2$ be fixed, and let $n = k^\ell$ for some $\ell \in \mathbb{N}$. Suppose
	$U,V,W \in \{-1, 0, 1\}^{r \times k^{2}}$ form a rank-$r$ multiplication scheme
	satisfying \cref{eq:main_vectorized_strassen_type_eq}.
	Then the basis change induced by any $B_U, B_V, B_W \in \{-1, 0, 1\}^{k^{2} \times k^{2}}$
	that are invertible over the field $\mathbb{F}$ under consideration can be implemented by preprocessing the
	inputs and postprocessing the output, using $O(n^2 \log{n})$ arithmetic operations.
\end{theorem}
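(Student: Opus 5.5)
The plan is to follow Karstadt and Schwartz: define a \emph{recursive} basis-transformation operator and show that it commutes with the recursive structure of the scheme. For a matrix $X \in \mathbb{F}^{n\times n}$ with $n = k^\ell$, define $\phi_{B,\ell}(X)$ recursively. For $\ell = 0$ set $\phi_{B,0}(X) = X$. For $\ell \geq 1$, view $X$ as a $k\times k$ block matrix with blocks $X_{ij}\in\mathbb{F}^{k^{\ell-1}\times k^{\ell-1}}$. First apply $\phi_{B,\ell-1}$ to every block. Then form the $k^2$ output blocks as the linear combinations prescribed by $B$ acting on the block-vectorization. In other words, $\phi_{B,\ell}$ is the map $B^{\otimes \ell}$ acting on $\vec{X}$ after a suitable reindexing.

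I would apply $\phi_{B_U,\ell}$ to $A$ and $\phi_{B_V,\ell}$ to $B$. Then I would run the recursive algorithm built from $(U',V',W')$ in \cref{eq:transformed_UVW}, and finally apply $\phi_{B_W,\ell}$ to the result.

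Correctness is proved by induction on $\ell$. Define the recursive algorithm $\mathrm{ALG}'_\ell$ that, on inputs $\tilde A,\tilde B$, splits them into blocks, forms $U'$-combinations and $V'$-combinations of the blocks, and recursively multiplies with $\mathrm{ALG}'_{\ell-1}$. It then combines the products with $W'^{\intercal}$. The claim to prove is
\[
\phi_{B_W,\ell}\bigl(\mathrm{ALG}'_\ell(\phi_{B_U,\ell}(A),\phi_{B_V,\ell}(B))\bigr) = AB.
\]
The case $\ell = 1$ is exactly \cref{eq:apply_basis_change_2}.

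For the inductive step, I would write $\phi_{B,\ell} = (B\otimes I)\circ(\text{blockwise }\phi_{B,\ell-1})$. The two factors commute because one acts on block indices and the other acts within blocks. Under this factorization, the $i$-th product computed at the top level is $\mathrm{ALG}'_{\ell-1}$ applied to $\phi_{B_U,\ell-1}$ and $\phi_{B_V,\ell-1}$ of the blocks $(U\vec{A})_i$ and $(V\vec{B})_i$. This holds because $U'B_U = U$ and because the linear combinations commute with the blockwise linear map $\phi_{\cdot,\ell-1}$. By the inductive hypothesis, the $i$-th product, once passed through $\phi_{B_W,\ell-1}$, equals the true block product $M_i$. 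Since $\phi_{B_W,\ell-1}$ is linear and the $W'$-combination commutes with it, the output after $\phi_{B_W,\ell}$ equals $B_W W'^{\intercal}$ applied to $(M_i)$. This is $W^\intercal$ applied to $(M_i)$, which equals $\vec{AB}$ by \cref{eq:main_vectorized_strassen_type_eq} at block level.

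Cost: let $q$ be the number of additions needed to apply $B$ (a constant depending only on $k$, at most $k^2(k^2-1)$). Then the cost satisfies $T(n) = k^2 T(n/k) + q\,(n/k)^2$. Unrolling gives $T(n) = \sum_{j=1}^{\ell} k^{2(j-1)}\cdot q\, k^{2(\ell-j)} = q\,\ell\, n^2/k^2 = O(n^2\log n)$. The three transforms together are therefore $O(n^2\log n)$.

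The main obstacle is the inductive step: stating cleanly that the blockwise recursive transform commutes with the top-level linear combinations, and that the recursion of $\mathrm{ALG}'$ receives inputs in the correct transformed basis. I would handle this by writing everything in Kronecker form, with $\phi_{B,\ell} = B^{\otimes \ell}$ under a fixed ordering of vectorization. Then the identity reduces to mixed-product properties of $\otimes$ together with the base-case identity.
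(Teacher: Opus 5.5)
Your proposal is correct and follows essentially the same route as the paper: the same recursive operator (transform each block at level $\ell-1$, then apply $B$ to the block vector), the same induction on $\ell$ using $U'B_U=U$, $V'B_V=V$, $B_W(W')^\intercal=W^\intercal$ and linearity of the lower-level transform, and the same level-by-level cost count $q\,\ell\,k^{2\ell-2}=O(n^2\log n)$. Your Kronecker-product form $\phi_{B,\ell}=B^{\otimes\ell}$ is just a convenient repackaging of that argument, not a different one.
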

\begin{proof}
	For $\ell \geq 1$, let $\mathcal{M}_{\ell}^{U,V,W}(A,B)$ denote the output obtained by recursively applying the multiplication
	scheme determined by $U,V,W$ to matrices $A,B\in\mathbb{F}^{k^\ell\times k^\ell}$.
	Since $U,V,W$ satisfy \cref{eq:main_vectorized_strassen_type_eq}, we have $\mathcal{M}^{U,V,W}_{\ell}(A,B)=AB$.

	Let $\operatorname{BlockVec}_k$ map a $k\times k$ block matrix to the column-wise vector of its $k^2$
	blocks, and let $\operatorname{BlockMat}_k$ denote its inverse. For $Z\in\{U,V,W\}$, define recursively
	a linear operator
	\begin{align*}
		\mathcal{T}^{B_Z}_i:
		\mathbb{F}^{k^i\times k^i}\longrightarrow
		\mathbb{F}^{k^i\times k^i}
	\end{align*}
	by $\mathcal{T}^{B_Z}_{0}(Y) = Y$ and
	\begin{align*}
		\mathcal{T}^{B_Z}_i(Y) = \operatorname{BlockMat}_k\!\left( B_Z\operatorname{BlockVec}_k
		\left(
			\begin{bmatrix}
					\mathcal{T}^{B_Z}_{i-1}(Y_{1,1}) & \cdots &
					\mathcal{T}^{B_Z}_{i-1}(Y_{1,k})                   \\
					\vdots                           & \ddots & \vdots \\
					\mathcal{T}^{B_Z}_{i-1}(Y_{k,1}) & \cdots &
					\mathcal{T}^{B_Z}_{i-1}(Y_{k,k})
				\end{bmatrix}
			\right)
		\right),
	\end{align*}
	where $Y$ is viewed as a $k\times k$ block matrix with blocks in $\mathbb{F}^{k^{i-1}\times k^{i-1}}$.
	Since $B_Z$ is invertible, $\mathcal{T}^{B_Z}_i$ is invertible for every $i$.

	Recall that
	\begin{align*}
		U^\prime = UB_U^{-1},\qquad
		V^\prime = VB_V^{-1},\qquad
		(W^\prime)^\top = B_W^{-1}W^\top.
	\end{align*}
	We claim that, for every $\ell\geq 1$,
	\begin{align}
		\mathcal{T}^{B_W}_\ell \left( \mathcal{M}^{U^\prime,V^\prime,W^\prime}_\ell
		\left( \mathcal{T}^{B_U}_\ell A, \mathcal{T}^{B_V}_\ell B \right) \right)
		= \mathcal{M}^{U,V,W}_\ell(A,B) = AB.
		\label{eq:basis-change-induction}
	\end{align}
	We prove the claim by induction on $\ell$. For $\ell=1$, the input transformation for $A$ satisfies
	\begin{align*}
		U^\prime \operatorname{BlockVec}_k \left(\mathcal{T}^{B_U}_1A\right) =
		UB_U^{-1}B_U\operatorname{BlockVec}_k(A) = U\operatorname{BlockVec}_k(A),
	\end{align*}
	and similarly
	\begin{align*}
		V^\prime \operatorname{BlockVec}_k \left(\mathcal{T}^{B_V}_1B\right) = V\operatorname{BlockVec}_k(B).
	\end{align*}
	Thus the transformed and untransformed schemes form the same vector $M$ of $r$ products. The output
	of the transformed scheme is
	\begin{align*}
		\operatorname{BlockMat}_k\left(B_W^{-1}W^\top M\right).
	\end{align*}
	Applying $\mathcal{T}^{B_W}_1$ gives
	\begin{align*}
		\operatorname{BlockMat}_k \left(B_WB_W^{-1}W^\top M\right) = \operatorname{BlockMat}_k(W^\top M) = AB.
	\end{align*}
	Hence \eqref{eq:basis-change-induction} holds for $\ell=1$.

	Now suppose that the claim holds for $\ell=m-1$. View $A$ and $B$ as $k\times k$ block matrices with
	blocks of size $k^{m-1}\times k^{m-1}$. Let
	\begin{align*}
		(\mathcal{A}_1,\ldots,\mathcal{A}_r)^\top = U\operatorname{BlockVec}_k(A), \qquad
		(\mathcal{B}_1,\ldots,\mathcal{B}_r)^\top = V\operatorname{BlockVec}_k(B)
	\end{align*}
	be the block linear combinations formed by the original scheme. By the definition and linearity of the
	operators $\mathcal{T}^{B_U}_{m-1}$ and $\mathcal{T}^{B_V}_{m-1}$,
	\begin{align*}
		U^\prime \operatorname{BlockVec}_k
		\left(\mathcal{T}^{B_U}_mA\right) = \begin{bmatrix}
			                                    \mathcal{T}^{B_U}_{m-1}(\mathcal{A}_1) \\
			                                    \vdots                                 \\
			                                    \mathcal{T}^{B_U}_{m-1}(\mathcal{A}_r)
		                                    \end{bmatrix},
	\end{align*}
	and
	\begin{align*}
		V^\prime \operatorname{BlockVec}_k \left(\mathcal{T}^{B_V}_mB\right) = \begin{bmatrix}
			                                                                       \mathcal{T}^{B_V}_{m-1}(\mathcal{B}_1) \\
			                                                                       \vdots                                 \\
			                                                                       \mathcal{T}^{B_V}_{m-1}(\mathcal{B}_r)
		                                                                       \end{bmatrix}.
	\end{align*}
	Let $M^{\prime}_{j}$ be the $j$th recursive product computed by the transformed
	scheme. By the induction hypothesis,
	\begin{align*}
		\mathcal{T}^{B_W}_{m-1}(M^{\prime}_{j}) = \mathcal{A}_j\mathcal{B}_j = M_j
	\end{align*}
	for every $j=1,\ldots,r$. The output before postprocessing is therefore
	\begin{align*}
		C^\prime = \operatorname{BlockMat}_k \left(B_W^{-1}W^\top M^\prime\right),
	\end{align*}
	where $M^\prime=(M^\prime_1,\ldots,M^\prime_r)^\top$. Applying $\mathcal{T}^{B_W}_{m}$ and using the linearity of
	$\mathcal{T}^{B_W}_{m-1}$ gives
	\begin{align*}
		\begin{aligned}
			\mathcal{T}^{B_W}_m(C^\prime) & =
			\operatorname{BlockMat}_k \left( B_WB_W^{-1}W^\top
			\begin{bmatrix}
				\mathcal{T}^{B_W}_{m-1}(M^\prime_1) \\
				\vdots                              \\
				\mathcal{T}^{B_W}_{m-1}(M^\prime_r)
			\end{bmatrix}
			\right)                                                                                                    \\
			                              & = \operatorname{BlockMat}_k(W^\top M) = \mathcal{M}^{U,V,W}_{m}(A,B) = AB.
		\end{aligned}
	\end{align*}
	This proves \eqref{eq:basis-change-induction} for every $\ell$.

	It remains to bound the cost of computing the operators $\mathcal{T}^{B_Z}_{\ell}$.
	Let $\beta_Z$ be the number of block additions required to apply $B_Z$ to a vector of $k^2$ blocks.
	Since $k$ is fixed, $\beta_Z = O(1)$. For example, direct evaluation gives $\beta_Z \leq k^4-k^2$.

	At recursion level $i$, there are $k^{2(\ell-i)}$ applications of $B_Z$, each to blocks of size
	$k^{i-1} \times k^{i-1}$. Consequently, the number of scalar additions required is at most
	\begin{align*}
		\beta_Z\sum_{i=1}^{\ell} k^{2(\ell-i)}k^{2(i-1)} = \beta_Z\ell k^{2\ell-2}
		= \frac{\beta_Z}{k^2}n^2\log_k n = O(n^2\log n).
	\end{align*}
	Thus each of the two preprocessing transformations and the postprocessing transformation can be computed
	using $O(n^2\log n)$ arithmetic operations. Since there are only three such transformations, their total
	cost is also $O(n^2\log n)$.
\end{proof}
\Cref{thm:basis_change} tells us that the leading constant in the asymptotic operation
count for a rank-$r$ Strassen-type matrix multiplication scheme depends exclusively
on the number of additions required to compute the linear transformations induced by
$U^\prime$, $V^\prime$, and $W^\prime$.

\subsection{Lifting Schemes} \label{sec:lifting}
Most of the search in our work is performed over $\mathbb{F}_2$, where addition and subtraction
coincide. To obtain an algorithm valid over general fields, the addition schemes found over
$\mathbb{F}_2$ must be lifted to signed addition schemes over a general ring.

We emphasize that this is not a lift of the full matrix multiplication scheme. The
rank-$23$ multiplication scheme is already a valid scheme over any ring $R$. What
must be lifted are the addition schemes used to evaluate the linear maps induced by
$U$, $V$, and $W$ over the field $\mathbb{F}_2$.
To this end, we restrict to lifts that preserve the DAG structure induced by the $\mathbb{F}_2$
addition schemes, and replace each edge weight by a sign in $\{-1, 1\}$, see \Cref{fig:lifting_V_DAG}.
The goal is to decide which additions over $\mathbb{F}_2$ should become subtractions over
the target ring $R$, see \Cref{example:lifting}.
\begin{example} \label{example:lifting}
	Let $x \in R^4$, and let $V \in \{-1,0,1\}^{7 \times 4}$ be given as in
	\Cref{example:UVW_min}. Suppose we want to compute $x \mapsto Vx$ using as few
	additions as possible:
	\begin{align}
		\begin{bmatrix}
			v_1 \\
			v_2 \\
			v_3 \\
			v_4 \\
			v_5 \\
			v_6 \\
			v_7
		\end{bmatrix} = Vx = \begin{bmatrix}
			                     1  & 0 & 0  & 0  \\
			                     0  & 0 & 1  & -1 \\
			                     -1 & 0 & 1  & 0  \\
			                     1  & 0 & -1 & 1  \\
			                     0  & 1 & 0  & 0  \\
			                     0  & 0 & 0  & 1  \\
			                     -1 & 1 & 1  & -1
		                     \end{bmatrix}
		\begin{bmatrix}
			x_1 \\ x_2 \\ x_3 \\ x_4
		\end{bmatrix} = \begin{bmatrix}
			                x_1             \\
			                x_3 - x_4       \\
			                x_3 - x_1       \\
			                x_1 - x_3 + x_4 \\
			                x_2             \\
			                x_4             \\
			                x_2 - x_1 + x_3 - x_4
		                \end{bmatrix}. \label{example:Vx}
	\end{align}
	Over $\mathbb{F}_2$ we find that we can compute the linear map in $4$ additions,
	as follows:
	\begin{align*}
		\begin{alignedat}{4}
			v_{1} & = x_{1}, & \qquad v_{2} & = x_{3} + x_{4}, & \qquad v_{3} & = x_{3} + x_{1}, & \qquad v_{4} & = x_{1} + v_{2}, \\
			v_{5} & = x_{2}, & \qquad v_{6} & = x_{4},         & \qquad v_{7} & = x_{2} + v_{4}. &              &
		\end{alignedat}
	\end{align*}
	We can write this as a DAG with edge-weighted sign variables $\epsilon_i$ that take values in
	$\{-1, 1\}$, see \Cref{fig:lifting_V_DAG}.
	\begin{figure}[H]
		\centering
		\resizebox{\textwidth}{!}{%
			\begin{tikzpicture}[
				vertex/.style={
						shape=circle,
						draw,
						minimum size=8mm,
						text width=8mm,
						align=center,
						inner sep=0pt,
						font=\scriptsize
					},
				edge/.style={-{Latex[length=2mm]}},
				lab/.style={pos=0.58, fill=white, inner sep=1pt, font=\scriptsize}
				]

				\begin{scope}[shift={(0,0)}]
					\node at (0,0.9) {DAG over $\mathbb{F}_2$};

					\node[vertex] (Lx1) at (-2.7,0) {$x_1$};
					\node[vertex] (Lx3) at (-0.9,0) {$x_3$};
					\node[vertex] (Lx4) at ( 0.9,0) {$x_4$};
					\node[vertex] (Lx2) at ( 2.7,0) {$x_2$};

					\node[vertex] (Lv3) at (-1.4,-1.4) {$v_3$};
					\node[vertex] (Lv2) at ( 0.4,-1.4) {$v_2$};

					\node[vertex] (Lv4) at (-2.7,-2.8) {$v_4$};
					\node[vertex] (Lv7) at ( 0.0,-4.2) {$v_7$};

					\draw[edge] (Lx3) -- (Lv2);
					\draw[edge] (Lx4) -- (Lv2);

					\draw[edge] (Lx3) -- (Lv3);
					\draw[edge] (Lx1) -- (Lv3);

					\draw[edge] (Lx1) -- (Lv4);
					\draw[edge] (Lv2) -- (Lv4);

					\draw[edge] (Lx2) -- (Lv7);
					\draw[edge] (Lv4) -- (Lv7);
				\end{scope}

				\draw[-{Latex[length=3mm]}, line width=0.6pt]
				(4.0,-2.1) -- node[above, font=\scriptsize] {Add Variables} (5.6,-2.1);

				\begin{scope}[shift={(9.6,0)}]
					\node at (0,0.9) {edge-weighted DAG over $R$};

					\node[vertex] (Rx1) at (-2.7,0) {$x_1$};
					\node[vertex] (Rx3) at (-0.9,0) {$x_3$};
					\node[vertex] (Rx4) at ( 0.9,0) {$x_4$};
					\node[vertex] (Rx2) at ( 2.7,0) {$x_2$};

					\node[vertex] (Rv3) at (-1.4,-1.4) {$v_3$};
					\node[vertex] (Rv2) at ( 0.4,-1.4) {$v_2$};

					\node[vertex] (Rv4) at (-2.7,-2.8) {$v_4$};
					\node[vertex] (Rv7) at ( 0.0,-4.2) {$v_7$};

					\draw[edge] (Rx3) -- node[lab] {$\epsilon_1$} (Rv2);
					\draw[edge] (Rx4) -- node[lab] {$\epsilon_2$} (Rv2);

					\draw[edge] (Rx3) -- node[lab] {$\epsilon_3$} (Rv3);
					\draw[edge] (Rx1) -- node[lab] {$\epsilon_4$} (Rv3);

					\draw[edge] (Rx1) -- node[lab] {$\epsilon_5$} (Rv4);
					\draw[edge] (Rv2) -- node[lab] {$\epsilon_6$} (Rv4);

					\draw[edge] (Rx2) -- node[lab] {$\epsilon_7$} (Rv7);
					\draw[edge] (Rv4) -- node[lab] {$\epsilon_8$} (Rv7);
				\end{scope}

			\end{tikzpicture}%
		}
		\caption{Lifting an unweighted DAG over $\mathbb{F}_2$ to an edge-weighted DAG over an arbitrary ring $R$.
			The left DAG computes the linear transformation induced by $V$ over $\mathbb{F}_2$.
			On the right, each edge is assigned a sign variable $\epsilon_i \in \{-1,1\}$.
			Choosing the signs appropriately can lift the computation from $\mathbb{F}_2$ to $R$.
			Direct copies such as $v_1 = x_1$, $v_5 = x_2$, and $v_6 = x_4$ are omitted in this graph.}
		\label{fig:lifting_V_DAG}
	\end{figure}
	Using the variable edge-weighted DAG in \Cref{fig:lifting_V_DAG}, and comparing it with the
	direct expansion of $Vx$ shown in \cref{example:Vx}, we get:
	\begin{align}
		\begin{bmatrix}
			x_1                                                                  \\
			\epsilon_1 x_3 + \epsilon_2 x_4                                      \\
			\epsilon_4 x_1 + \epsilon_3 x_3                                      \\
			\epsilon_5 x_1 + \epsilon_6\epsilon_1 x_3 + \epsilon_6\epsilon_2 x_4 \\
			x_2                                                                  \\
			x_4                                                                  \\
			\epsilon_8\epsilon_5 x_1 + \epsilon_7 x_2
			+ \epsilon_8\epsilon_6\epsilon_1 x_3
			+ \epsilon_8\epsilon_6\epsilon_2 x_4
		\end{bmatrix} = \begin{bmatrix}
			                v_1 \\
			                v_2 \\
			                v_3 \\
			                v_4 \\
			                v_5 \\
			                v_6 \\
			                v_7
		                \end{bmatrix} = \begin{bmatrix}
			                                x_1             \\
			                                x_3 - x_4       \\
			                                x_3 - x_1       \\
			                                x_1 - x_3 + x_4 \\
			                                x_2             \\
			                                x_4             \\
			                                x_2 - x_1 + x_3 - x_4
		                                \end{bmatrix},
	\end{align}
	Comparing coefficients yields the following polynomial equations in the variables
	$\epsilon_i \in \{-1, 1\}$:
	\begin{align}
		\begin{alignedat}{4}
			\epsilon_1 & = 1,                                  & \qquad \epsilon_2           & = -1, & \qquad \epsilon_3 & = 1, & \qquad \epsilon_4 & = -1, \\
			\epsilon_5 & = 1,                                  & \qquad \epsilon_6\epsilon_1 & = -1,
			           & \qquad \epsilon_6\epsilon_2           & = 1,                        &       &                                                      \\
			\epsilon_7 & = 1,                                  & \qquad \epsilon_8\epsilon_5 & = -1,
			           & \qquad \epsilon_8\epsilon_6\epsilon_1 & = 1,
			           & \qquad \epsilon_8\epsilon_6\epsilon_2 & = -1.
		\end{alignedat}
	\end{align}
	A valid solution is
	\begin{align}
		\epsilon_1 = 1,\quad
		\epsilon_2 = -1,\quad
		\epsilon_3 = 1,\quad
		\epsilon_4 = -1,\quad
		\epsilon_5 = 1,\quad
		\epsilon_6 = -1,\quad
		\epsilon_7 = 1,\quad
		\epsilon_8 = -1.
	\end{align}
	This solution coincides with the DAG illustrated for computing the linear
	map $\vec{B} \mapsto V\vec{B}$ in \Cref{fig:DAG_rep_minadd_example}.
\end{example}
In general, we solve the polynomial equations arising from our lifting setup using a
simple backtracking approach that systematically tests all possible values of each
variable $\epsilon_i$, pruning the search whenever one of the resulting reduced equations
gives rise to a contradiction.

\section{Search Method}
Previous work on reducing the number of additions required to compute existing rank-$23$ schemes for multiplying
$3 \times 3$ matrices has considered only relatively few well-known schemes \cite{MartenssonWagner_NumberBeast}.
We make use of the database of over $17,000$ rank-$23$ multiplication schemes due to Heule et al. \cite{HeuleKauersSeidl2021NewWays3x3}.

Each of the schemes found by Heule et al. was first computed in $\mathbb{F}_2$, and then most of them
were lifted to general fields $\mathbb{F}$ by systematically changing some of the $1$s to $-1$s \cite{HeuleKauersSeidl2021NewWays3x3}.
We can therefore easily revert each $\mathbb{F}$ scheme to the $\mathbb{F}_2$ scheme from which it was
derived by replacing each entry of $U$, $V$, and $W$ by its absolute value, that is, changing each $-1$ to a $1$.

\begin{example}
	The first rank-$23$ scheme for $3 \times 3$ matrix multiplication was introduced by Laderman
	in 1976 \cite{Laderman1976Noncommutative3x3}. In the form of \cref{eq:main_vectorized_strassen_type_eq}, Laderman's scheme
	can be written in terms of the matrices illustrated in \Cref{fig:laderman_matrices}.
	\begin{figure}[H]
		\begin{center}
			\begin{tikzpicture}[scale=0.25]
				\begin{scope}
					\node at (-2,11.65) {$U = $};
					\draw (0,0) grid (9, 23);
					\fill[decorate, pattern=north west lines] (0,22) rectangle (1,23);
					\fill[decorate, pattern=north west lines] (1,22) rectangle (2,23);
					\fill[decorate, pattern=dots] (3,22) rectangle (4,23);
					\fill[decorate, pattern=dots] (4,22) rectangle (5,23);
					\fill[decorate, pattern=dots] (5,22) rectangle (6,23);
					\fill[decorate, pattern=dots] (6,22) rectangle (7,23);
					\fill[decorate, pattern=dots] (8,22) rectangle (9,23);
					\fill[decorate, pattern=north west lines] (0,21) rectangle (1,22);
					\fill[decorate, pattern=north west lines] (1,21) rectangle (2,22);
					\fill[decorate, pattern=north west lines] (4,20) rectangle (5,21);
					\fill[decorate, pattern=dots] (0,19) rectangle (1,20);
					\fill[decorate, pattern=dots] (1,19) rectangle (2,20);
					\fill[decorate, pattern=north west lines] (4,19) rectangle (5,20);
					\fill[decorate, pattern=dots] (1,18) rectangle (2,19);
					\fill[decorate, pattern=north west lines] (4,18) rectangle (5,19);
					\fill[decorate, pattern=north west lines] (0,17) rectangle (1,18);
					\fill[decorate, pattern=north west lines] (0,16) rectangle (1,17);
					\fill[decorate, pattern=north west lines] (2,16) rectangle (3,17);
					\fill[decorate, pattern=north west lines] (5,16) rectangle (6,17);
					\fill[decorate, pattern=north west lines] (0,15) rectangle (1,16);
					\fill[decorate, pattern=north west lines] (2,15) rectangle (3,16);
					\fill[decorate, pattern=north west lines] (2,14) rectangle (3,15);
					\fill[decorate, pattern=north west lines] (5,14) rectangle (6,15);
					\fill[decorate, pattern=north west lines] (0,13) rectangle (1,14);
					\fill[decorate, pattern=north west lines] (2,13) rectangle (3,14);
					\fill[decorate, pattern=north west lines] (3,13) rectangle (4,14);
					\fill[decorate, pattern=dots] (4,13) rectangle (5,14);
					\fill[decorate, pattern=north west lines] (5,13) rectangle (6,14);
					\fill[decorate, pattern=dots] (6,13) rectangle (7,14);
					\fill[decorate, pattern=north west lines] (7,13) rectangle (8,14);
					\fill[decorate, pattern=north west lines] (5,12) rectangle (6,13);
					\fill[decorate, pattern=north west lines] (5,11) rectangle (6,12);
					\fill[decorate, pattern=north west lines] (6,11) rectangle (7,12);
					\fill[decorate, pattern=north west lines] (8,11) rectangle (9,12);
					\fill[decorate, pattern=north west lines] (6,10) rectangle (7,11);
					\fill[decorate, pattern=north west lines] (8,10) rectangle (9,11);
					\fill[decorate, pattern=north west lines] (6,9) rectangle (7,10);
					\fill[decorate, pattern=dots] (5,8) rectangle (6,9);
					\fill[decorate, pattern=dots] (8,8) rectangle (9,9);
					\fill[decorate, pattern=north west lines] (4,7) rectangle (5,8);
					\fill[decorate, pattern=north west lines] (6,7) rectangle (7,8);
					\fill[decorate, pattern=dots] (7,7) rectangle (8,8);
					\fill[decorate, pattern=dots] (6,6) rectangle (7,7);
					\fill[decorate, pattern=north west lines] (7,6) rectangle (8,7);
					\fill[decorate, pattern=north west lines] (4,5) rectangle (5,6);
					\fill[decorate, pattern=dots] (7,5) rectangle (8,6);
					\fill[decorate, pattern=north west lines] (3,4) rectangle (4,5);
					\fill[decorate, pattern=north west lines] (7,3) rectangle (8,4);
					\fill[decorate, pattern=north west lines] (1,2) rectangle (2,3);
					\fill[decorate, pattern=north west lines] (2,1) rectangle (3,2);
					\fill[decorate, pattern=north west lines] (8,0) rectangle (9,1);
				\end{scope}

				\begin{scope}[xshift=16cm]
					\node at (-2,11.65) {$V = $};
					\draw (0,0) grid (9, 23);
					\fill[decorate, pattern=dots] (4,22) rectangle (5,23);
					\fill[decorate, pattern=north west lines] (3,21) rectangle (4,22);
					\fill[decorate, pattern=north west lines] (4,21) rectangle (5,22);
					\fill[decorate, pattern=north west lines] (0,20) rectangle (1,21);
					\fill[decorate, pattern=north west lines] (1,20) rectangle (2,21);
					\fill[decorate, pattern=north west lines] (2,20) rectangle (3,21);
					\fill[decorate, pattern=dots] (3,20) rectangle (4,21);
					\fill[decorate, pattern=dots] (4,20) rectangle (5,21);
					\fill[decorate, pattern=dots] (7,20) rectangle (8,21);
					\fill[decorate, pattern=dots] (8,20) rectangle (9,21);
					\fill[decorate, pattern=dots] (0,19) rectangle (1,20);
					\fill[decorate, pattern=north west lines] (3,19) rectangle (4,20);
					\fill[decorate, pattern=north west lines] (4,19) rectangle (5,20);
					\fill[decorate, pattern=dots] (0,18) rectangle (1,19);
					\fill[decorate, pattern=north west lines] (3,18) rectangle (4,19);
					\fill[decorate, pattern=dots] (0,17) rectangle (1,18);
					\fill[decorate, pattern=north west lines] (0,16) rectangle (1,17);
					\fill[decorate, pattern=dots] (6,16) rectangle (7,17);
					\fill[decorate, pattern=north west lines] (7,16) rectangle (8,17);
					\fill[decorate, pattern=dots] (6,15) rectangle (7,16);
					\fill[decorate, pattern=north west lines] (7,15) rectangle (8,16);
					\fill[decorate, pattern=north west lines] (0,14) rectangle (1,15);
					\fill[decorate, pattern=dots] (6,14) rectangle (7,15);
					\fill[decorate, pattern=north west lines] (7,13) rectangle (8,14);
					\fill[decorate, pattern=dots] (0,12) rectangle (1,13);
					\fill[decorate, pattern=north west lines] (1,12) rectangle (2,13);
					\fill[decorate, pattern=dots] (2,12) rectangle (3,13);
					\fill[decorate, pattern=dots] (4,12) rectangle (5,13);
					\fill[decorate, pattern=north west lines] (5,12) rectangle (6,13);
					\fill[decorate, pattern=north west lines] (6,12) rectangle (7,13);
					\fill[decorate, pattern=dots] (7,12) rectangle (8,13);
					\fill[decorate, pattern=north west lines] (2,11) rectangle (3,12);
					\fill[decorate, pattern=north west lines] (4,11) rectangle (5,12);
					\fill[decorate, pattern=dots] (5,11) rectangle (6,12);
					\fill[decorate, pattern=dots] (4,10) rectangle (5,11);
					\fill[decorate, pattern=north west lines] (5,10) rectangle (6,11);
					\fill[decorate, pattern=north west lines] (2,9) rectangle (3,10);
					\fill[decorate, pattern=dots] (2,8) rectangle (3,9);
					\fill[decorate, pattern=north west lines] (5,8) rectangle (6,9);
					\fill[decorate, pattern=dots] (2,7) rectangle (3,8);
					\fill[decorate, pattern=north west lines] (7,7) rectangle (8,8);
					\fill[decorate, pattern=north west lines] (8,7) rectangle (9,8);
					\fill[decorate, pattern=north west lines] (7,6) rectangle (8,7);
					\fill[decorate, pattern=north west lines] (8,6) rectangle (9,7);
					\fill[decorate, pattern=north west lines] (2,5) rectangle (3,6);
					\fill[decorate, pattern=dots] (8,5) rectangle (9,6);
					\fill[decorate, pattern=north west lines] (1,4) rectangle (2,5);
					\fill[decorate, pattern=north west lines] (5,3) rectangle (6,4);
					\fill[decorate, pattern=north west lines] (6,2) rectangle (7,3);
					\fill[decorate, pattern=north west lines] (3,1) rectangle (4,2);
					\fill[decorate, pattern=north west lines] (8,0) rectangle (9,1);
				\end{scope}

				\begin{scope}[xshift=32cm]
					\node at (-2,11.65) {$W = $};
					\draw (0,0) grid (9, 23);
					\fill[decorate, pattern=north west lines] (3,22) rectangle (4,23);
					\fill[decorate, pattern=north west lines] (1,21) rectangle (2,22);
					\fill[decorate, pattern=north west lines] (4,21) rectangle (5,22);
					\fill[decorate, pattern=north west lines] (1,20) rectangle (2,21);
					\fill[decorate, pattern=north west lines] (1,19) rectangle (2,20);
					\fill[decorate, pattern=dots] (3,19) rectangle (4,20);
					\fill[decorate, pattern=north west lines] (4,19) rectangle (5,20);
					\fill[decorate, pattern=north west lines] (3,18) rectangle (4,19);
					\fill[decorate, pattern=dots] (4,18) rectangle (5,19);
					\fill[decorate, pattern=dots] (0,17) rectangle (1,18);
					\fill[decorate, pattern=north west lines] (1,17) rectangle (2,18);
					\fill[decorate, pattern=north west lines] (2,17) rectangle (3,18);
					\fill[decorate, pattern=dots] (3,17) rectangle (4,18);
					\fill[decorate, pattern=north west lines] (4,17) rectangle (5,18);
					\fill[decorate, pattern=dots] (6,17) rectangle (7,18);
					\fill[decorate, pattern=north west lines] (8,17) rectangle (9,18);
					\fill[decorate, pattern=north west lines] (2,16) rectangle (3,17);
					\fill[decorate, pattern=dots] (6,16) rectangle (7,17);
					\fill[decorate, pattern=north west lines] (8,16) rectangle (9,17);
					\fill[decorate, pattern=dots] (2,15) rectangle (3,16);
					\fill[decorate, pattern=dots] (8,15) rectangle (9,16);
					\fill[decorate, pattern=north west lines] (6,14) rectangle (7,15);
					\fill[decorate, pattern=dots] (8,14) rectangle (9,15);
					\fill[decorate, pattern=north west lines] (6,13) rectangle (7,14);
					\fill[decorate, pattern=north west lines] (2,12) rectangle (3,13);
					\fill[decorate, pattern=north west lines] (2,11) rectangle (3,12);
					\fill[decorate, pattern=dots] (3,11) rectangle (4,12);
					\fill[decorate, pattern=north west lines] (5,11) rectangle (6,12);
					\fill[decorate, pattern=north west lines] (2,10) rectangle (3,11);
					\fill[decorate, pattern=north west lines] (5,10) rectangle (6,11);
					\fill[decorate, pattern=north west lines] (0,9) rectangle (1,10);
					\fill[decorate, pattern=north west lines] (1,9) rectangle (2,10);
					\fill[decorate, pattern=dots] (2,9) rectangle (3,10);
					\fill[decorate, pattern=north west lines] (3,9) rectangle (4,10);
					\fill[decorate, pattern=dots] (5,9) rectangle (6,10);
					\fill[decorate, pattern=north west lines] (6,9) rectangle (7,10);
					\fill[decorate, pattern=north west lines] (7,9) rectangle (8,10);
					\fill[decorate, pattern=north west lines] (3,8) rectangle (4,9);
					\fill[decorate, pattern=dots] (5,8) rectangle (6,9);
					\fill[decorate, pattern=north west lines] (1,7) rectangle (2,8);
					\fill[decorate, pattern=north west lines] (6,7) rectangle (7,8);
					\fill[decorate, pattern=north west lines] (7,7) rectangle (8,8);
					\fill[decorate, pattern=north west lines] (1,6) rectangle (2,7);
					\fill[decorate, pattern=north west lines] (7,6) rectangle (8,7);
					\fill[decorate, pattern=north west lines] (6,5) rectangle (7,6);
					\fill[decorate, pattern=north west lines] (7,5) rectangle (8,6);
					\fill[decorate, pattern=north west lines] (0,4) rectangle (1,5);
					\fill[decorate, pattern=north west lines] (4,3) rectangle (5,4);
					\fill[decorate, pattern=north west lines] (7,2) rectangle (8,3);
					\fill[decorate, pattern=north west lines] (5,1) rectangle (6,2);
					\fill[decorate, pattern=north west lines] (8,0) rectangle (9,1);
				\end{scope}

				\begin{scope}[xshift=45cm]
					\draw[decorate, pattern=north west lines] (0,10) rectangle (1,11);
					\node at (2.5, 10.5) {\small $= 1$};

					\draw[decorate, pattern=dots] (0,12) rectangle (1,13);
					\node at (3, 12.5) {\small $= -1$};

					\draw[decorate] (0,14) rectangle (1,15);
					\node at (2.5, 14.5) {\small $= 0$};
				\end{scope}
			\end{tikzpicture}
		\end{center}
		\caption{Illustration of the matrices $U$, $V$, and $W$ that constitute Laderman's rank-$23$ scheme for
			multiplying $3 \times 3$ matrices \protect\cite{Laderman1976Noncommutative3x3}. These specific matrices
			$U$, $V$, and $W$ were adapted from the Laderman scheme provided by \protect\cite{HeuleKauersSeidl2021NewWays3x3}.} \label{fig:laderman_matrices}
	\end{figure}
	The matrices $U, V, W \in \{-1, 0, 1\}^{23 \times 9}$ constituting Laderman's scheme, shown in
	\Cref{fig:laderman_matrices}, define a valid $3 \times 3$ matrix multiplication scheme over general fields
	$\mathbb{F}$. If we replace each $-1$ with a $1$, the resulting scheme remains valid over $\mathbb{F}_2$. The converse does not
	necessarily hold: a scheme over $\mathbb{F}_2$ does not necessarily lift to a ternary scheme over general
	fields \cite{HeuleKauersSeidl2021NewWays3x3}.
\end{example}

\subsection{Greedy-Potential}
The greedy-potential heuristic designed by Mårtensson and Wagner in their work \cite{MartenssonWagner_NumberBeast}
is a method for reducing the number of additions required to evaluate a linear map
\begin{align}
	x \longmapsto Gx,
	\qquad
	G \in \{-1,0,1\}^{m\times n}.
\end{align}
In our application, this map is one of the three linear maps induced by
$U$, $V$, and $W^\top$, or by their basis-changed analogues. The
heuristic constructs an addition scheme by repeatedly introducing a new
temporary variable that is the signed sum of two variables already available.

Suppose that the current set of available variables is
$x_{1}, \ldots, x_{n}$, and consider a candidate
\begin{align*}
	x_{n+1} = x_{i} + \sigma x_{j},
	\qquad 1 \leq i < j \leq n,\quad \sigma \in \{-1,1\}.
\end{align*}
This candidate is useful in every row $k$ for which the terms involving
$x_i$ and $x_j$ occur with the same relative sign as in $x_{n+1}$.
That is, the candidate is useful in each row of $G$ indexed by $k \in I_{i,j,\sigma}$, where
\begin{align*}
	I_{i,j,\sigma}(G) = \{\, k \in [m] : G_{k,i} = \sigma G_{k,j} \in \{-1,1\} \,\}.
\end{align*}
For every $k \in I_{i,j,\sigma}(G)$, the two terms $G_{k,i}x_i + G_{k,j}x_j$ can be replaced by the single term
$G_{k,i}x_{n+1}$. Thus, if $G^\prime \in \{-1,0,1\}^{m\times(n+1)}$ denotes the augmented matrix after
introducing $x_{n+1}$, see \Cref{fig:heuristic_algo}, then
\begin{align*}
	G^\prime_{k,n+1} = G_{k,i}, \qquad
	G^\prime_{k,i} = G^\prime_{k,j} = 0,
	\qquad k\in I_{i,j,\sigma}(G),
\end{align*}
while all other entries are copied from $G$. Introducing the variable $x_{n+1}$
costs one addition, but removes one addition from each row in $I_{i,j,\sigma}(G)$.
Hence the immediate saving from adding this candidate is
\begin{align*}
	|I_{i,j,\sigma}(G)| - 1.
\end{align*}
\Cref{fig:heuristic_algo} illustrates how $G$ is augmented according to the
greedy-potential algorithm, when choosing a pair of indices $1 \leq i,j \leq n$
for constructing the variable $x_{n+1} = x_{i} + \sigma x_{j}$.
\begin{figure}[H]
	\begin{center}
		\begin{tikzpicture}[scale=1.0]
			\node at (-0.5,1) {$G = $};
			\draw[step=0.5] (0,0) grid (2.5,2);
			\fill[pattern=north west lines, pattern color=black] (0,1.5) rectangle (0.5,2);
			\fill[pattern=dots, pattern color=black] (0,0.5) rectangle (0.5,1);
			\fill[pattern=north west lines, pattern color=black] (0,0) rectangle (0.5,0.5);
			\fill[pattern=north west lines, pattern color=black] (0.5,1) rectangle (1,1.5);
			\fill[pattern=dots, pattern color=black] (0.5,0) rectangle (1,0.5);
			\fill[pattern=dots, pattern color=black] (1,1.5) rectangle (1.5,2);
			\fill[pattern=north west lines, pattern color=black] (1,1) rectangle (1.5,1.5);
			\fill[pattern=north west lines, pattern color=black] (1,0.5) rectangle (1.5,1);
			\fill[pattern=north west lines, pattern color=black] (1.5,0.5) rectangle (2,1);
			\fill[pattern=dots, pattern color=black] (1.5,0) rectangle (2,0.5);
			\fill[pattern=dots, pattern color=black] (2,1.5) rectangle (2.5,2);
			\fill[pattern=dots, pattern color=black] (2,0.5) rectangle (2.5,1);
			\fill[pattern=north west lines, pattern color=black] (2,0) rectangle (2.5,0.5);
			\draw[very thick] (0,0) rectangle (0.5,2);
			\draw[very thick] (2,0) rectangle (2.5,2);

			\node at (4.5,1) {$G^\prime = $};
			\draw[step=0.5] (4.999,0) grid (8,2);
			\fill[pattern=north west lines, pattern color=black] (5,1.5) rectangle (5.5,2);
			\fill[pattern=north west lines, pattern color=black] (5.5,1) rectangle (6,1.5);
			\fill[pattern=dots, pattern color=black] (5.5,0) rectangle (6,0.5);
			\fill[pattern=dots, pattern color=black] (6,1.5) rectangle (6.5,2);
			\fill[pattern=north west lines, pattern color=black] (6,1) rectangle (6.5,1.5);
			\fill[pattern=north west lines, pattern color=black] (6,0.5) rectangle (6.5,1);
			\fill[pattern=north west lines, pattern color=black] (6.5,0.5) rectangle (7,1);
			\fill[pattern=dots, pattern color=black] (6.5,0) rectangle (7,0.5);
			\fill[pattern=dots, pattern color=black] (7,1.5) rectangle (7.5,2);
			\fill[pattern=dots, pattern color=black] (7.5, 0.5) rectangle (8, 1);
			\fill[pattern=north west lines, pattern color=black] (7.5, 0) rectangle (8, 0.5);
			\draw[very thick] (7.5,0) rectangle (8,2);

			\draw[->] (0.25,2.1) to[bend left] node[midway, above] {\small $\sigma = 1$} (7.75,2.1);
			\draw[->] (2.25,2.1) to[bend left] (7.75,2.1);

			\begin{scope}[xshift=5.5cm]
				\fill[pattern=north west lines, pattern color=black, draw=black] (4,1.4) rectangle (4.5,1.9);
				\node at (4.9, 1.65) {$= 1$};
				\fill[color=white, draw=black] (4,0.75) rectangle (4.5,1.25);
				\node at (4.9, 1) {$= 0$};
				\fill[pattern=dots, pattern color=black, draw=black] (4,0.1) rectangle (4.5,0.6);
				\node at (5.04, 0.35) {$= -1$};
			\end{scope}
		\end{tikzpicture}
	\end{center}
	\caption{Illustration of how the column vector $G_{:,n+1}^\prime$ is constructed, and how the
		matrix $G$ is augmented, when $\sigma = 1$ and $(i, j) = (1, 5)$.}\label{fig:heuristic_algo}
\end{figure}
The potential of the matrix $G \in \{-1, 0, 1\}^{m \times n}$ is defined as
\begin{align}
	P(G) = \sum_{1 \leq i < j \leq n} \sum_{\sigma \in \{-1,1\}}
	\max\{0,\, |I_{i,j,\sigma}(G)| - 1\}. \label{eq:potential_def}
\end{align}
In other words, this quantity is the sum of the positive immediate savings over all candidate variables
$x_{n+1} = x_{i} + \sigma x_{j}$. It provides an upper bound on the total number of additions that
can be saved by repeatedly taking steps of the greedy-potential algorithm. For a fixed parameter
$\alpha > 0$, we score the candidate $x_{n+1} = x_{i} + \sigma x_{j}$ by
\begin{align*}
	\operatorname{score}(i,j,\sigma) = |I_{i,j,\sigma}(G)| - 1 + \alpha P(G^\prime).
\end{align*}
The greedy-potential heuristic considers only candidates with positive immediate savings,
i.e. those satisfying $|I_{i,j,\sigma}(G)| \geq 2$. Among these candidates, it applies one
with the largest $\operatorname{score}(i,j,\sigma)$. The process is repeated until no such
candidate remains, equivalently until the potential is zero.
The remaining rows of the final matrix are then expanded directly.
Together, the chosen temporary variables and these final
row expansions give an addition scheme for the original map $x \mapsto Gx$.

\subsubsection{A Potential Difference}
Currently, to the best of our knowledge, all implementations of the greedy-potential heuristic
implement finding the potential of a matrix more or less directly from the definition, see
\cref{eq:potential_def}.
Observe that if our candidate variable is $x_{n+1} = x_{i} + \sigma x_{j}$, then for all pairs
of column indices $i^\prime, j^\prime \in [n] \setminus \{i, j\}$ we have
\begin{align*}
	I_{i^\prime,j^\prime,\sigma}(G^\prime) = I_{i^\prime,j^\prime,\sigma}(G),
\end{align*}
since, among the existing columns, augmenting $G$ according to the candidate variable
$x_{n+1} = x_{i} + \sigma x_{j}$ affects only columns $i$ and $j$.
Therefore, computing $P(G^\prime)$ reduces to simply looking at those index pairs $(i^\prime, j^\prime)$
that have $\{i, j\} \cap \{i^\prime, j^\prime\} \neq \emptyset$. Define $N_{i,j}(G)$ as the
sum of the nonnegative immediate savings associated with the candidates
$x_{n+1} = x_{i} + x_{j}$ and $x_{n+1} = x_{i} - x_{j}$,
i.e. we define
\begin{align*}
	N_{i,j}(G) = \sum_{\sigma \in \{-1, 1\}} \max\left(0, |I_{i,j,\sigma}(G)| - 1\right),
\end{align*}
so that:
\begin{align*}
	P(G^\prime)  = \sum_{i^\prime = 1}^{n} \sum_{j^\prime = i^\prime+1}^{n + 1} N_{i^\prime, j^\prime}(G^\prime)
	= P(G) & + \left(\sum_{\substack{k = 1                             \\ k \neq i}}^{n} N_{i,k}(G^\prime) +
	\sum_{\substack{k = 1                                              \\ k \neq j}}^{n} N_{k,j}(G^\prime)
	- N_{i,j}(G^\prime) + \sum_{k = 1}^{n} N_{k,n+1}(G^\prime) \right) \\
	       & - \left(\sum_{\substack{k = 1                             \\ k \neq j}}^{n} N_{k,j}(G)
	+ \sum_{\substack{k = 1                                            \\ k \neq i}}^{n} N_{k,i}(G)  - N_{i,j}(G)\right).
\end{align*}
If we know the value of $P(G)$, computing the right-hand side requires only $O(mn)$ operations, whereas
computing \cref{eq:potential_def} directly requires $O(mn^2)$ operations. We confirm the effectiveness
of a greedy-potential implementation based on this new setup in \Cref{fig:gp_comparison}.
\begin{figure}[H]
	\begin{center}
		\begin{tikzpicture}
			\begin{axis}[
					ybar,
					ymode=log,
					log origin=infty,
					ymin=8,
					ymax=400,
					bar width=18pt,
					enlarge x limits=0.25,
					ylabel={Average time per scheme (s)},
					symbolic x coords={oursftwo,ours,dp,mwc,mwp},
					xtick={oursftwo,ours,dp,mwc,mwp},
					xticklabels={\small{Ours $\mathbb{F}_2$}, \small{Ours}, \small{DP}, \small{MWC}, \small{MWP}},
					axis lines=box,
					xtick pos=left,
					ytick pos=left,
					axis line style={black, very thick},
					ymajorgrids=true,
					grid style={black!35, dashed},
					nodes near coords,
					nodes near coords align={vertical},
					bar shift=0pt,
					point meta=rawy,
					nodes near coords={
							\pgfmathprintnumber[fixed,precision=2]{\pgfplotspointmeta}
						},
					every node near coord/.append style={text=black},
					scale=0.8
				]

				\addplot+[draw=black, fill=white, pattern=horizontal lines]
				coordinates {(oursftwo, 12.35)};

				\addplot+[draw=black, fill=white, pattern=north east lines]
				coordinates {(ours, 10.17)};

				\addplot+[draw=black, fill=white, pattern=grid]
				coordinates {(dp, 28.115)};

				\addplot+[draw=black, fill=white, pattern=dots]
				coordinates {(mwc, 12.681)};

				\addplot+[draw=black, fill=white, pattern=north west lines]
				coordinates {(mwp, 260.645)};

			\end{axis}
		\end{tikzpicture}

	\end{center}
	\caption{Speed comparison between implementations of the greedy potential heuristic.
		\texttt{DP} is the implementation by \protect\cite{perminov2025parallel},
		\texttt{MWC} and \texttt{MWP} are the C and Python implementations, respectively, provided
		by \protect\cite{MartenssonWagner_NumberBeast}.
		The comparison was conducted on the $3 \times 3$ matrix multiplication scheme dataset provided by
		\protect\cite{HeuleKauersSeidl2021NewWays3x3}.
		Note that our $\mathbb{F}_2$ implementation operates over $\mathbb{F}_2$, while all
		other implementations work over $\{-1, 0, 1\}$.
		Each run was allotted $72$ hours on a single core of an Intel Xeon E5-2660 v3 CPU
		to process the entire dataset. Note that the \texttt{MWP} and \texttt{DP} methods failed
		to process the entire dataset within the time limit.
	}\label{fig:gp_comparison}
\end{figure}
We remark that, over $\mathbb{F}_2$, we can optimize the running time of the dense greedy-potential method by
using bitwise operations. The matrices in \cite{HeuleKauersSeidl2021NewWays3x3} are relatively sparse, so
certain $\mathbb{F}_2$ optimizations might not be relevant for our purposes, see \Cref{fig:gp_comparison}.

\section{Results}
The focus of this work is on rank-$23$ general $3 \times 3$ matrix multiplication, so $r = 23$ and
$k = 3$. For each $\rho = 2^{-6},2^{-5},2^{-4},2^{-3},2^{-2},2^{-1}$ we draw $2^{14} = 16384$ invertible
Bernoulli-random $k^2 \times k^2$ matrices. We augment each data point in the dataset provided by Heule et al.
with these matrices, as detailed in \Cref{thm:basis_change}.
Finally, for each greedy-potential parameter $\alpha = 0.1,0.2,\dots,1.0$ we apply the greedy-potential
method to each possibly basis-changed data point.

\begin{figure}[H]
	\centering
	\begin{subfigure}[t]{0.49\textwidth}
		\centering
		\includegraphics[width=\linewidth]{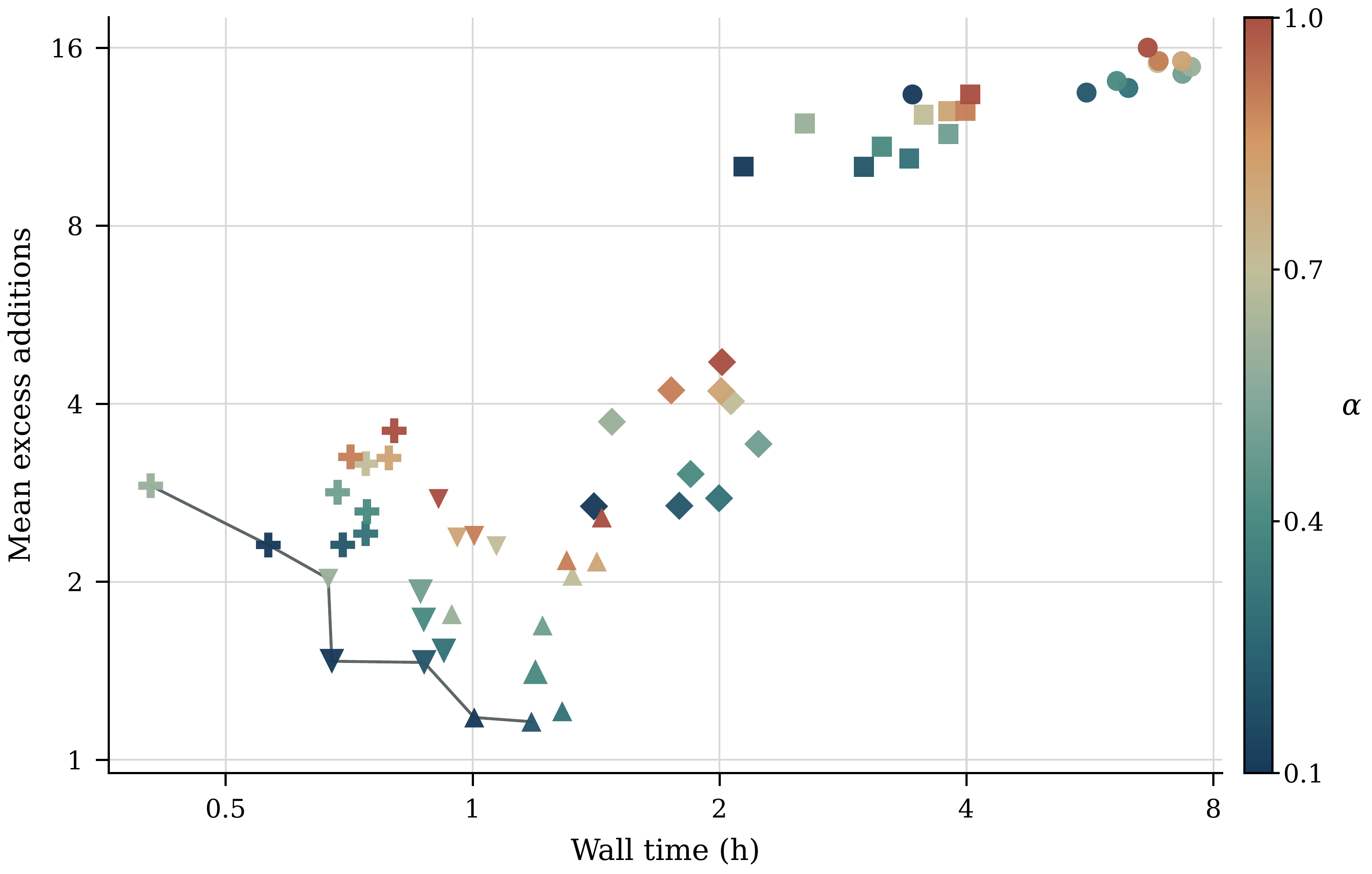}
		\caption{Full sweep.}
	\end{subfigure}
	\hfill
	\begin{subfigure}[t]{0.49\textwidth}
		\centering
		\includegraphics[width=\linewidth]{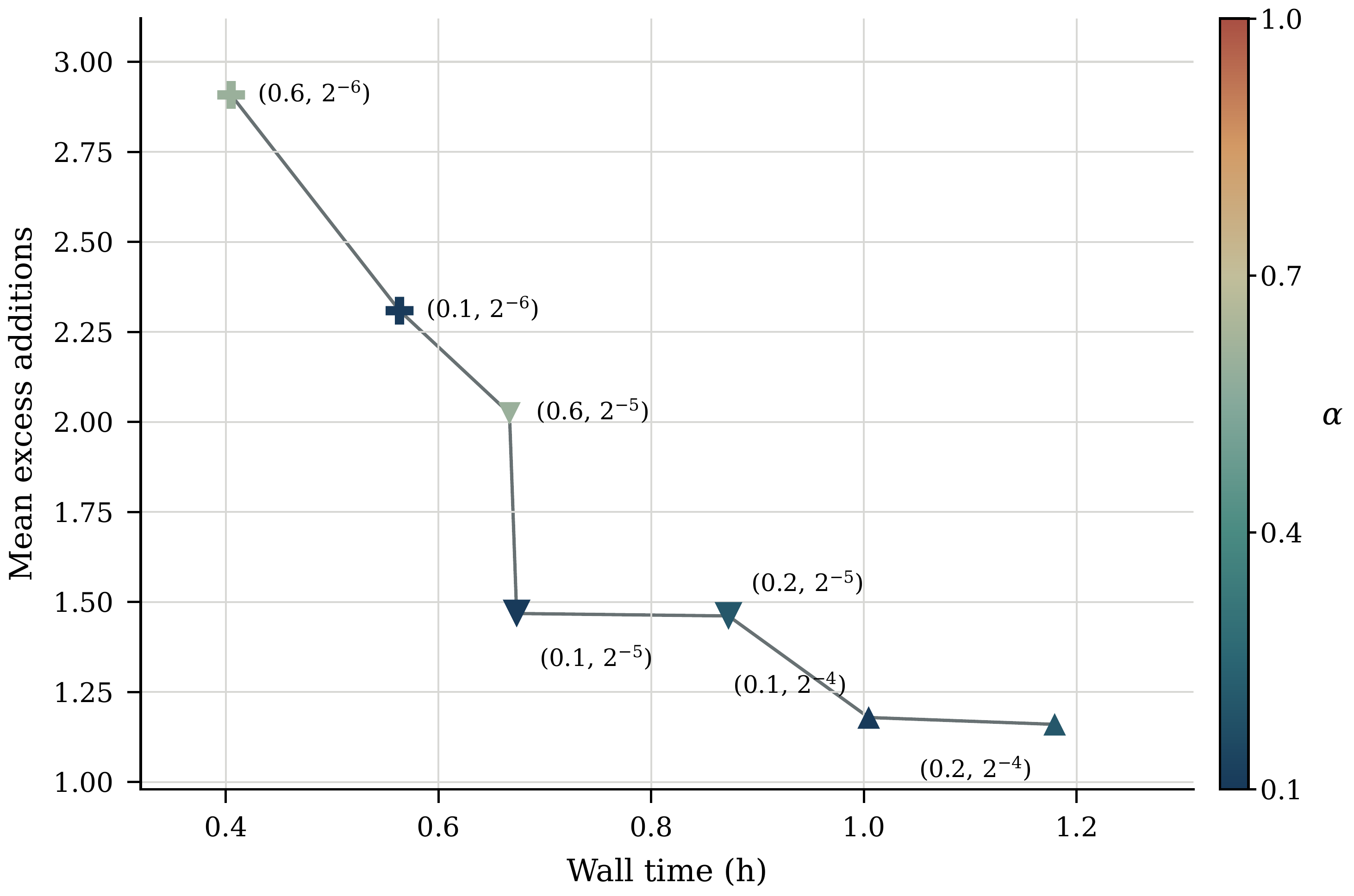}
		\caption{Pareto frontier.}
		\label{fig:quality_time_alpha_rho_tradeoff_B}
	\end{subfigure}
	\par\smallskip
	\sharedRhoLegend
	\par\smallskip
	\caption{Quality--time trade-off for the $(\alpha,\rho)$ sweep using eight cores.
		Color encodes $\alpha$, marker shape encodes $\rho$, and the line shows the Pareto frontier.
		Excess additions are measured relative to the best result found for a given data point in the dataset provided
		by \cite{HeuleKauersSeidl2021NewWays3x3}.}
	\label{fig:quality_time_alpha_rho_tradeoff}
\end{figure}
The experiments in \Cref{fig:quality_time_alpha_rho_tradeoff} collectively give rise
to six distinct $55$-addition schemes over $\mathbb{F}_2$, each of which improves upon
the previous state of the art of $56$ additions established by \cite{sun2026exact56additionrank23}.

Based on the results in \Cref{fig:quality_time_alpha_rho_tradeoff}, we selected $32$
cores, $4,000,000$ basis-change matrices, $\rho = 2^{-4}$, and $\alpha = 0.2$. In other words, we
allocated more computational resources to the rightmost point on the Pareto-optimal lower envelope shown
in \Cref{fig:quality_time_alpha_rho_tradeoff_B}. This produced schemes with addition counts
as shown in \Cref{fig:3M_GP_run}.
\begin{figure}[H]
	\begin{center}
		\begin{tikzpicture}[scale=0.8]
			\begin{axis}[
					width=12cm,
					height=7cm,
					ybar,
					bar width=4pt,
					xlabel={Optimized number of additions},
					ylabel={Frequency},
					xmin=51.5, xmax=69.5,
					ymin=0,
					ymax=3600,
					grid=major,
					xtick={52,53,...,71},
					legend cell align={left},
					legend style={
							at={(1.02,1)},
							anchor=north west,
							draw=black,
							fill=white,
							font=\small
						},
				]

				\addplot[
					draw=black,
					pattern=north east lines,
					pattern color=black,
				] coordinates {
						(52,5)
						(53,26)
						(54,87)
						(55,287)
						(56,840)
						(57,1578)
						(58,2525)
						(59,2909)
						(60,2877)
						(61,2377)
						(62,1735)
						(63,1058)
						(64,599)
						(65,294)
						(66,126)
						(67,39)
						(68,13)
						(69,1)
					};
				\addlegendentry{Greedy Potential}
			\end{axis}
		\end{tikzpicture}
	\end{center}
	\caption{Distribution of optimized addition counts over $\mathbb{F}_2$ for the rank-$23$,
	$3 \times 3$ matrix-multiplication schemes in the dataset of
	\protect\cite{HeuleKauersSeidl2021NewWays3x3}. For each scheme, the count is the total number
	of additions required to evaluate $U\vec{A}$, $V\vec{B}$, and $W^\intercal M$ in
	\protect\cref{eq:main_vectorized_strassen_type_eq}, using the best basis changes found among
	$4,000,000$ invertible matrices in $\mathbb{F}_2^{3^2 \times 3^2}$ obtained from
	Bernoulli-random matrices with bias $\rho = 2^{-4}$.
	Addition schemes were constructed using the greedy-potential heuristic with $\alpha = 0.2$,
	the search used $32$ cores and ran for $\sim 50$ hours.} \label{fig:3M_GP_run}
\end{figure}
Of the schemes found, $405$ are state of the art in that they require $55$ or fewer additions to compute,
but they only work over $\mathbb{F}_2$.
We extend one of the resulting $52$-addition schemes to work over general fields $\mathbb{F}$ using the method described
in \Cref{sec:lifting}, which leads to the following scheme, derived from the datapoint
\texttt{17x2+52x+4y6+5y4+14y3+z32+2z27/i12w195c23ci-000.tab}, in \cite{HeuleKauersSeidl2021NewWays3x3}.

The inverses of the basis-change matrices $B_U$, $B_V$, and $B_W$ are
\begin{align*}
	 & B_{U}^{-1} = \begin{bmatrix}
		                0 & -1 & 0 & 0  & -1 & 0  & 0 & 0 & 0  \\
		                0 & 1  & 0 & 0  & 0  & 0  & 1 & 0 & 0  \\
		                0 & 0  & 0 & 0  & 1  & 0  & 0 & 0 & 0  \\
		                0 & 0  & 0 & 0  & 0  & 0  & 0 & 0 & -1 \\
		                0 & 0  & 0 & -1 & 0  & 0  & 0 & 0 & 0  \\
		                0 & 0  & 0 & 1  & 0  & 0  & 0 & 1 & 1  \\
		                1 & 0  & 0 & 0  & 0  & -1 & 0 & 0 & 0  \\
		                0 & 0  & 1 & 0  & 0  & 0  & 0 & 0 & 0  \\
		                0 & 0  & 0 & 0  & 0  & 1  & 0 & 0 & 0
	                \end{bmatrix}
	 & B_{V}^{-1} = \begin{bmatrix}
		                0 & 0 & 0 & 0 & 0 & 1 & 0 & 0 & 0 \\
		                0 & 0 & 1 & 0 & 0 & 0 & 0 & 0 & 0 \\
		                0 & 1 & 0 & 0 & 0 & 1 & 0 & 0 & 0 \\
		                1 & 0 & 0 & 0 & 0 & 0 & 0 & 0 & 0 \\
		                0 & 0 & 0 & 0 & 1 & 0 & 0 & 0 & 1 \\
		                0 & 0 & 0 & 1 & 0 & 0 & 0 & 0 & 0 \\
		                0 & 0 & 0 & 0 & 0 & 0 & 1 & 0 & 0 \\
		                0 & 0 & 0 & 0 & 0 & 0 & 0 & 0 & 1 \\
		                0 & 0 & 0 & 1 & 0 & 0 & 0 & 1 & 0
	                \end{bmatrix}
	 &                                           & B_{W}^{-1} =
	\begin{bmatrix}
		0 & 0 & 0  & 0 & 0  & 0  & 0  & 1 & 0 \\
		0 & 0 & -1 & 0 & 0  & 0  & 0  & 0 & 1 \\
		0 & 1 & 0  & 0 & 0  & 0  & 0  & 0 & 0 \\
		0 & 0 & -1 & 0 & 0  & 0  & 0  & 0 & 0 \\
		0 & 0 & 0  & 1 & 0  & 0  & 0  & 0 & 0 \\
		0 & 1 & 0  & 0 & 1  & -1 & 0  & 0 & 0 \\
		0 & 0 & 0  & 0 & -1 & 0  & 0  & 0 & 0 \\
		1 & 0 & 0  & 0 & 0  & 0  & -1 & 0 & 0 \\
		0 & 0 & 0  & 0 & 0  & 0  & -1 & 0 & 0
	\end{bmatrix}
\end{align*}
We first compute the basis change induced by $B_{U}$ and $B_{V}$ to obtain
$\tilde{A}, \tilde{B} \in \mathbb{F}^{3^\ell \times 3^\ell}$ from $A,B \in \mathbb{F}^{3^\ell \times 3^\ell}$
for some $\ell \in \mathbb{N}$, see the proof of \Cref{thm:basis_change}.
We compute the basis-changed $U^\prime$, $V^\prime$, and $W^\prime$, see \cref{eq:transformed_UVW}.
We compute $U^\prime \vec{\tilde{A}}$ in $12$ additions as follows:
\begin{align*}
	 & u_{1} = \tilde{A}_{1,3} + \tilde{A}_{3,1}, &  & u_{2} = \tilde{A}_{2,3} - \tilde{A}_{3,3},  &  & u_{3} = \tilde{A}_{1,1} - u_{2}, \\
	 & u_{4} = \tilde{A}_{1,2} + \tilde{A}_{2,2}, &  & u_{5} = \tilde{A}_{3,3},                    &  & u_{6} = u_{3},                   \\
	 & u_{7} = u_{4},                             &  & u_{8} = u_{5},                              &  & u_{9} = \tilde{A}_{1,1} + u_{1}, \\
	 & u_{10} = \tilde{A}_{1,2} + u_{1},          &  & u_{11} = \tilde{A}_{3,2} - u_{2},           &  & u_{12} = \tilde{A}_{1,3},        \\
	 & u_{13} = \tilde{A}_{3,1} - u_{17},         &  & u_{14} = \tilde{A}_{2,1},                   &  & u_{15} = u_{11} - u_{12},        \\
	 & u_{16} = \tilde{A}_{3,1},                  &  & u_{17} = \tilde{A}_{2,1} - \tilde{A}_{1,2}, &  & u_{18} = \tilde{A}_{2,2},        \\
	 & u_{19} = u_{14} + u_{11},                  &  & u_{20} = \tilde{A}_{2,3},                   &  & u_{21} = \tilde{A}_{3,2},        \\
	 & u_{22} = \tilde{A}_{1,2},                  &  & u_{23} = u_{21} + u_{17}.
\end{align*}
Similarly, we find the following $12$-addition scheme for computing $V^\prime \vec{\tilde{B}}$:
\begin{align*}
	 & v_{1} = \tilde{B}_{2,1} - \tilde{B}_{1,2}, &  & v_{2} = \tilde{B}_{3,2} + v_{1},            &  & v_{3} = \tilde{B}_{2,3},           \\
	 & v_{4} = \tilde{B}_{3,3} - \tilde{B}_{1,3}, &  & v_{5} = v_{4},                              &  & v_{6} = \tilde{B}_{1,3},           \\
	 & v_{7} = v_{3} - v_{2},                     &  & v_{8} = v_{7},                              &  & v_{9} = \tilde{B}_{1,2},           \\
	 & v_{10} = \tilde{B}_{1,1} - v_{1},          &  & v_{11} = \tilde{B}_{1,1} + \tilde{B}_{2,2}, &  & v_{12} = \tilde{B}_{1,1},          \\
	 & v_{13} = \tilde{B}_{2,2} + v_{17},         &  & v_{14} = \tilde{B}_{2,2},                   &  & v_{15} = \tilde{B}_{3,3} + v_{11}, \\
	 & v_{16} = \tilde{B}_{2,1},                  &  & v_{17} = \tilde{B}_{3,3} - \tilde{B}_{2,1}, &  & v_{18} = \tilde{B}_{3,1} - v_{2},  \\
	 & v_{19} = \tilde{B}_{3,1} - v_{11},         &  & v_{20} = v_{18},                            &  & v_{21} = \tilde{B}_{3,3},          \\
	 & v_{22} = \tilde{B}_{3,2},                  &  & v_{23} = \tilde{B}_{3,2} - v_{17}.
\end{align*}
Finally, let $M_{i} = u_{i} v_{i}$. We compute $(W^\prime)^\intercal M$
using $28$ additions:
\begin{align*}
	 & s_{1} = M_{7} - M_{16},   &  & s_{2} = M_{1} - M_{12},   &  & s_{3} = M_{14} - M_{5},   \\
	 & s_{4} = M_{8} - M_{2},    &  & s_{5} = M_{17} - M_{8},   &  & s_{6} = M_{13} + M_{4},   \\
	 & s_{7} = M_{22} + s_{1},   &  & s_{8} = M_{16} - s_{2},   &  & s_{9} = s_{3} - M_{21},   \\
	 & s_{10} = s_{4} - M_{20},  &  & s_{11} = M_{21} + s_{5},  &  & s_{12} = M_{14} + s_{6},  \\
	 & s_{13} = M_{10} - s_{8},  &  & s_{14} = M_{9} - s_{8},   &  & s_{15} = s_{10} - M_{11}, \\
	 & s_{16} = s_{10} - M_{12}, &  & s_{17} = M_{22} + s_{11}, &  & s_{18} = M_{16} + s_{12}, \\
	 & s_{19} = M_{18} + s_{13}, &  & s_{20} = s_{14} - M_{20}, &  & s_{21} = M_{14} - s_{15}, \\
	 & s_{22} = s_{16} - M_{3},  &  & s_{23} = M_{6} + s_{16},  &  & s_{24} = M_{23} + s_{17}, \\
	 & s_{25} = M_{17} + s_{18}, &  & s_{26} = M_{19} + s_{21}, &  & s_{27} = s_{23} - M_{15}, \\
	 & s_{28} = M_{19} - s_{27}, &  & w_{1} = s_{24},           &  & w_{2} = s_{20},           \\
	 & w_{3} = s_{26},           &  & w_{4} = s_{22},           &  & w_{5} = s_{25},           \\
	 & w_{6} = s_{28},           &  & w_{7} = s_{9},            &  & w_{8} = s_{19},           \\
	 & w_{9} = s_{7}.
\end{align*}
Define the linear operator $T_{\ell}^{B_W}$ as in the proof of \Cref{thm:basis_change}. We get the
intermediate matrix:
\begin{align*}
	\tilde{C} = T_{\ell}^{B_W^{-1}} AB = \begin{bmatrix}
		                                     w_1 & w_4 & w_7 \\
		                                     w_2 & w_5 & w_8 \\
		                                     w_3 & w_6 & w_9
	                                     \end{bmatrix}.
\end{align*}
Postprocessing $\tilde{C}$ using the recursive linear transformation induced by $B_W$
yields exactly the matrix product $C = AB$, see \Cref{app:verification} for a symbolic verification.

\section{Conclusion}
In this work, we accelerate the greedy-potential heuristic by computing the change in potential associated
with each candidate addition, rather than recomputing the full potential from its definition.
Using this approach, we found hundreds of distinct rank-$23$ Strassen-type schemes for general $3 \times 3$
matrix multiplication over $\mathbb{F}_2$ whose addition counts improve upon the previous state of the art.
In particular, our search produced schemes requiring as few as $52$ additions.

We then lifted one such $52$-addition scheme from $\mathbb{F}_2$ to a signed addition scheme valid over
arbitrary fields. Together with the associated change of basis, this yields a rank-$23$ recursive
matrix-multiplication scheme requiring $52$ additions or subtractions at each recursive step.

\backmatter

\bmhead{Acknowledgments}
I am grateful to Jakob Lemvig and Christian Henriksen for their supervision,
feedback on methodology, and helpful comments.

\clearpage
\appendix
\section{Reproducibility}
All code, build configuration, and job scripts used to produce the results in
this paper are available in our GitHub repository:
\url{https://github.com/dragonoverlord3000/shortest_linear_program}.

The benchmark data were generated by running the following commands from the repository root:
\begin{lstlisting}[style=bashstyle]
make download-3x3

make clean
OMP_NUM_THREADS=32 make
OMP_NUM_THREADS=32 make bench-full BENCH_ARGS="--seed=6283 --verbose --no-preprocess --no-postprocess --potential_alpha=0.2 --optimization_strategy=single_shot --search_method=greedy_potential --benchmarks=3x3_matmul --num_basis_change=4000000 --potential_bit_p=0.0625 --output=build/benchmarks/full_3x3_c32_gp_6283.json"
\end{lstlisting}

\section{Verification}
\label{app:verification}
The following Python program provides a symbolic verification of the rank-$23$
scheme. It constructs the basis-changed inputs, evaluates the $23$ products and
the output straight-line program, reverses the output basis change, and verifies
with SymPy that the resulting matrix is exactly $AB$.

\begin{lstlisting}[
	style=pythonstyle,
	caption={Symbolic verification script (\texttt{small\_verify.py}).},
	label={lst:small-verify}
]
"""
We remark for the reader, that Heule et. al. use a transposed version of the matrices
for their dataset, so even though my SLP algorithms represent elements in column major
format, parts of this algorithm is in row-major format.
"""

import sympy as sp

BU = sp.Matrix(
    [
        [0, -1, 0, 0, -1, 0, 0, 0, 0],
        [0,  1, 0, 0,  0, 0, 1, 0, 0],
        [0,  0, 0, 0,  1, 0, 0, 0, 0],
        [0,  0, 0, 0,  0, 0, 0, 0, -1],
        [0,  0, 0, -1, 0, 0, 0, 0, 0],
        [0,  0, 0, 1,  0, 0, 0, 1, 1],
        [1,  0, 0, 0,  0, -1, 0, 0, 0],
        [0,  0, 1, 0,  0, 0, 0, 0, 0],
        [0,  0, 0, 0,  0, 1, 0, 0, 0],
    ]
)

BV = sp.Matrix(
    [
        [0, 0, 0, 0, 0, 1, 0, 0, 0],
        [0, 0, 1, 0, 0, 0, 0, 0, 0],
        [0, 1, 0, 0, 0, 1, 0, 0, 0],
        [1, 0, 0, 0, 0, 0, 0, 0, 0],
        [0, 0, 0, 0, 1, 0, 0, 0, 1],
        [0, 0, 0, 1, 0, 0, 0, 0, 0],
        [0, 0, 0, 0, 0, 0, 1, 0, 0],
        [0, 0, 0, 0, 0, 0, 0, 0, 1],
        [0, 0, 0, 1, 0, 0, 0, 1, 0],
    ]
)

BW = sp.Matrix(
    [
        [0, 0,  0, 0,  0,  0,  0, 1, 0],
        [0, 0, -1, 0,  0,  0,  0, 0, 1],
        [0, 1,  0, 0,  0,  0,  0, 0, 0],
        [0, 0, -1, 0,  0,  0,  0, 0, 0],
        [0, 0,  0, 1,  0,  0,  0, 0, 0],
        [0, 1,  0, 0,  1, -1,  0, 0, 0],
        [0, 0,  0, 0, -1,  0,  0, 0, 0],
        [1, 0,  0, 0,  0,  0, -1, 0, 0],
        [0, 0,  0, 0,  0,  0, -1, 0, 0],
    ]
)

A_vec = sp.Matrix([sp.Symbol(f"A_{i, j}", commutative=False) for i in range(3) for j in range(3)])
B_vec = sp.Matrix([sp.Symbol(f"B_{i, j}", commutative=False) for i in range(3) for j in range(3)])

A_vec_prime = BU.inv() @ A_vec
B_vec_prime = BV.inv() @ B_vec


def compute_U_prime(A_vec):
    A11, A12, A13, A21, A22, A23, A31, A32, A33 = list(A_vec)

    u1 = A13 + A31
    u2 = A23 - A33
    u3 = A11 - u2
    u4 = A12 + A22
    u9 = A11 + u1
    u10 = A12 + u1
    u11 = A32 - u2
    u17 = A21 - A12
    u13 = A31 - u17
    u15 = u11 - A13
    u19 = A21 + u11
    u23 = A32 + u17

    return {
        1: u1, 2: u2, 3: u3, 4: u4, 5: A33, 6: u3, 7: u4, 8: A33, 9: u9, 10: u10, 11: u11, 12: A13, 13: u13, 
        14: A21, 15: u15, 16: A31, 17: u17, 18: A22, 19: u19, 20: A23, 21: A32, 22: A12, 23: u23
    }


def compute_V_prime(B_vec):
    B11, B12, B13, B21, B22, B23, B31, B32, B33 = list(B_vec)

    v1 = B21 - B12
    v2 = B32 + v1
    v4 = B33 - B13
    v7 = B23 - v2
    v10 = B11 - v1
    v11 = B11 + B22
    v17 = B33 - B21
    v13 = B22 + v17
    v15 = B33 + v11
    v18 = B31 - v2
    v19 = B31 - v11
    v23 = B32 - v17

    return { 
        1: v1, 2: v2, 3: B23, 4: v4, 5: v4, 6: B13, 7: v7, 8: v7, 9: B12, 10: v10, 11: v11, 12: B11,
        13: v13, 14: B22, 15: v15, 16: B21, 17: v17, 18: v18, 19: v19, 20: v18, 21: B33, 22: B32, 23: v23
    }

def compute_W_prime(M):
    s1 = M[7] - M[16]
    s2 = M[1] - M[12]
    s3 = M[14] - M[5]
    s4 = M[8] - M[2]
    s5 = M[17] - M[8]
    s6 = M[13] + M[4]
    s7 = M[22] + s1
    s8 = M[16] - s2
    s9 = s3 - M[21]
    s10 = s4 - M[20]
    s11 = M[21] + s5
    s12 = M[14] + s6
    s13 = M[10] - s8
    s14 = M[9] - s8
    s15 = s10 - M[11]
    s16 = s10 - M[12]
    s17 = M[22] + s11
    s18 = M[16] + s12
    s19 = M[18] + s13
    s20 = s14 - M[20]
    s21 = M[14] - s15
    s22 = s16 - M[3]
    s23 = M[6] + s16
    s24 = M[23] + s17
    s25 = M[17] + s18
    s26 = M[19] + s21
    s27 = s23 - M[15]
    s28 = M[19] - s27

    return {1: s24, 2: s20, 3: s26, 4: s22, 5: s25, 6: s28, 7: s9, 8: s19, 9: s7}


u = compute_U_prime(A_vec_prime)
v = compute_V_prime(B_vec_prime)

M = {i: u[i] * v[i] for i in range(1, 24)}

w_prime = compute_W_prime(M)
w_prime_vec = sp.Matrix([w_prime[i] for i in range(1, 10)])
C_vec_col_major = BW.inv() @ w_prime_vec

C = sp.expand(
    sp.Matrix(
        3,
        3,
        lambda i, j: C_vec_col_major[3 * j + i],
    )
)

print(C)

# easier to verify this way
A = sp.Matrix([[sp.Symbol(f"A_{i, j}", commutative=False) for j in range(3)] for i in range(3)])
B = sp.Matrix([[sp.Symbol(f"B_{i, j}", commutative=False) for j in range(3)] for i in range(3)])
diff = sp.simplify(sp.expand(C - A @ B))
print("-" * 32)
print(diff)
\end{lstlisting}

\bibliography{main}
\end{document}